\DeclareMathOperator{\Inc}{Inc}
\DeclareMathOperator{\LG}{LG}
\newcommand{\F}{\ensuremath{\mathcal{F}}}
\renewcommand{\S}{\ensuremath{\mathcal{S}}}
\newcommand{\Prof}{\ensuremath{\mathcal{P}}}
\newcommand{\Leaves}{\ensuremath{\mathcal{L}}}
\newtheorem{observation}{Observation}
\begin{document}
%\title{Graph-Theoretic Characterizations of Agreement and Compatibility of Unrooted Trees\thanks{This work was supported in part by the National Science Foundation under grants CCF-1017189 and DEB-0829674.}}
%\author{Sudheer Vakati\footnote{Department of Computer Science, Iowa State University, Ames, IA 50011} \and David Fern\'{a}ndez-Baca\footnotemark[\value{footnote}]}
%\date{}
\mainmatter  
\title{Characterizing Compatibility and Agreement of Unrooted Trees via Cuts in Graphs\thanks{This work was supported in part by the National Science Foundation under grants CCF-1017189 and DEB-0829674.}}
%\title{Graph-Theoretic Characterizations of Agreement and Compatibility of Unrooted Trees\thanks{This work was supported in part by the National Science Foundation under grants CCF-1017189 and DEB-0829674.}}
\titlerunning{Compatibility and Agreement of Unrooted Trees}
\author{Sudheer Vakati \and David Fern\'{a}ndez-Baca}
\authorrunning{S. Vakati \and D. Fern\'{a}ndez-Baca}
\institute{Department of Computer Science, Iowa State University, Ames, IA\ \ 50011, USA \\\email{\{svakati,fernande\}@iastate.edu}}

\maketitle

\begin{abstract}
Deciding whether there is a single tree ---a supertree--- that summarizes the evolutionary information in a collection of unrooted trees is a fundamental problem in phylogenetics.  We consider two versions of this question: agreement and compatibility.  In the first, the supertree is required to reflect precisely the relationships among the species exhibited by the input trees.  In the second, the supertree can be more refined than the input trees. 

Tree compatibility can be characterized in terms of the existence of a specific kind of triangulation in a structure known as the display graph. Alternatively, it can be characterized as a chordal graph sandwich problem in a structure known as the edge label intersection graph. Here, we show that the latter characterization yields a natural characterization of compatibility in terms of minimal cuts in the display graph, which is closely related to compatibility of splits. We then derive a characterization for agreement.
%We also show how the characterization in terms of minimal cuts of display graph is related to the characterization in terms of triangulation of the display graph.
\end{abstract}

\section{Introduction}
A \emph{phylogenetic tree} $T$ is an unrooted tree whose leaves are bijectively mapped to a label set $\Leaves(T)$. Labels represent species and $T$ represents the evolutionary history of these species.  Let $\Prof$ be a collection of phylogenetic trees. We call $\Prof$ a \emph{profile}, refer to the trees in $\Prof$ as \emph{input trees}, and denote  the combined label set of the input trees, $\bigcup_{T \in \Prof}\Leaves(T)$, by $\Leaves(\Prof)$. A \emph{supertree} of $\Prof$ is a phylogenetic tree whose label set is $\Leaves(\Prof)$. The goal of constructing a supertree for a profile is to synthesize the information in the input trees in a larger, more comprehensive, phylogeny \cite{Gordon86}.  Ideally, a supertree should faithfully reflect the relationships among the species implied by the input trees.  In reality, it is rarely possible to achieve this, because of conflicts among the input trees due to errors in constructing them or to biological processes such as lateral gene transfer and gene duplication. 

We consider two classic versions of the supertree problem, based on the closely related notions of compatibility and agreement.  Let $S$ and $T$ be two phylogenetic trees where $\Leaves(T) \subseteq \Leaves(S)$ ---for our purposes, $T$ would be an input tree and $S$ a supertree.  Let $S'$ be the tree obtained by suppressing any degree two vertices in the minimal subtree of $S$ connecting the labels in $\Leaves(T)$. We say that $S$ \emph{displays} $T$, or that $T$ and $S$ are \emph{compatible}, if $T$ can be derived from $S'$ by contracting edges. We say that tree $T$ is an \emph{induced subtree} of $S$, or that $T$ and $S$ \emph{agree}, if $S'$ is isomorphic to $T$. 

Let $\Prof$ be a profile. The \emph{tree compatibility problem} asks if there exists a supertree for $\Prof$ that displays all the trees in $\Prof$. If such a supertree $S$ exists, we say that $\Prof$ is \emph{compatible} and $S$ is a \emph{compatible supertree} for $\Prof$. 
%A \emph{quartet} is a binary phylogenetic tree with exactly four leaves. 
The \emph{agreement supertree problem} asks if there exists a supertree for $\Prof$ that agrees with all the trees in $\Prof$.   If such a supertree $S$ exists, we say that $S$ is an \emph{agreement supertree} (AST) for $\Prof$. 

Compatibility and agreement embody different philosophies about conflict.  An agreement  supertree must reflect precisely the evolutionary relationships exhibited by the input trees.  In contrast, a compatible supertree is allowed to exhibit more fine-grained relationships among certain labels than those exhibited by an input tree.  Note that compatibility and agreement are equivalent when the input trees are binary.

If all the input trees share a common label (which can be viewed as a root node), both tree compatibility and agreement are solvable in polynomial time~\cite{AhoSagivSzymanskiUllman81,NgWormald96}.  In general, however, the two problems are NP-complete, and remain so even when the trees are quartets; i.e., binary trees with exactly four leaves~\cite{Steel92}. Nevertheless, Bryant and Lagergren showed that the tree compatibility problem is fixed parameter tractable when parametrized by number of trees~\cite{BryantLagergren06}.  
%Note that a binary tree $T$ is an induced subtree of $S$ if and only if $S$ displays $T$. Thus, the NP-completeness of quartet compatibility implies that the agreement supertree problem is also NP-complete. 
It in unknown whether or not the agreement supertree problem has the same property.  

%An instance of the tree compatibility problem can be transformed to an instance of character compatibility by representing each input tree by its set of splits~\cite{SempleSteel03}. An instance of tree compatibility can also be transformed into an instance of quartet compatibility~\cite{Steel92}. Grunewald et al.~\cite{Grunewald2008} characterized quartet compatibility in terms of finding an unification sequence for a structure called the quartet graph. 

To prove the fixed-parameter tractability of tree compatibility, Bryant and Lagergren first showed that a necessary (but not sufficient) condition for a profile to be compatible is that the tree-width of a certain graph ---the \emph{display graph} of the profile (see Section~\ref{sec:DG_ELIG})--- be bounded by the number of trees.  They then showed how to express compatibility as a bounded-size monadic second-order formula on the display graph.  By Courcelle's Theorem~\cite{Courcelle90,ArnborgLagergrenSeese91}, these two facts imply that compatibility can be decided in time linear in the size of the display graph.  Unfortunately,  Bryant and Lagergren's argument amounts essentially to only an existential proof, as it is not clear how to obtain an explicit algorithm for unrooted compatibility from it  

A necessary step towards finding a practical algorithm for compatibility ---and indeed for agreement--- is to develop an explicit characterization of the problem.  In earlier work~\cite{Vakati11}, we made some progress in this direction, characterizing tree compatibility in terms of the existence of a legal triangulation of the display graph of the profile. Gysel et al.~\cite{Gysel2012} provided an alternative characterization, based on a structure they call the edge label intersection graph (ELIG) (see Section~\ref{sec:DG_ELIG}).  Their formulation is in some ways simpler than that of~\cite{Vakati11}, allowing Gysel et al.~to express tree compatibility as a chordal sandwich problem.  Neither~\cite{Vakati11} nor~\cite{Gysel2012} deal with agreement.
%Here we show that the latter characterization translates to a characterization in terms of minimal cuts of display graphs. We also show how such cuts are closely related to the splits of the compatible supertree. 

%Finally, we show how these two characterizations relate to the legal triangulation characterization given in~\cite{Vakati11}.
Here, we show that the connection between separators in the ELIG and cuts in the display graph  (explored in Section~\ref{sec:DG_ELIG}) leads to a new, and natural, characterization of compatibility in terms of minimal cuts in the display graph (Section~\ref{sec:comp_cuts}). We then show how such cuts are closely related to the splits of the compatible supertree (Section~\ref{sec:splits_cuts}). Lastly, we give a characterization of the agreement in terms of minimal cuts of the display graph (Section~\ref{sec:agree_cuts}). 
%This characterization also translates to a characterization in terms of minimal separators of the edge label intersection graph. 
To our knowledge, there was no previous characterization of the agreement supertree problem for unrooted trees.

%\paragraph{Note.}  To meet the page limit, several proofs have been moved to the Appendix.

\section{Preliminaries}
%We begin this section with formal definitions of agreement and compatibility, based on splits.  We then review the notions of separators, cuts and triangulations in graphs.

\paragraph{Splits, Compatibility, and Agreement}

A \emph{split} of a label set $L$ is a bipartition of $L$ consisting of non-empty sets. We denote a split $\{X, Y\}$ by $X|Y$. Let $T$ be a phylogenetic tree. Consider an internal edge $e$ of $T$. Deletion of $e$ disconnects $T$ into two subtrees $T_1$ and $T_2$. If $L_1$ and $L_2$ denote the set of all labels in $T_1$ and $T_2$, respectively, then $L_1 | L_2$ is a split of $\Leaves(T)$. We denote by $\sigma_e(T)$ the split corresponding to edge $e$ of $T$ and by $\Sigma(T)$ the set of all splits corresponding to all internal edges of $T$. 

We say that a tree $T$ \emph{displays} a split $X|Y$ if there exists an internal edge $e$ of $T$ where $\sigma_e(T) = X|Y$. A set of splits is \emph{compatible} if there exists a tree that displays all the splits in the set. It is well-known that two splits $A_1|A_2$ and $B_1|B_2$ are compatible if and only if at least one of $A_1 \cap B_1$, $A_1 \cap B_2$, $A_2 \cap B_1$ and $A_2 \cap B_2$ is empty~\cite{SempleSteel03}. 
%By the Splits Equivalence Theorem~\cite{Buneman71,SempleSteel03}, a collection of splits is compatible if and only if every pair is compatible.

\begin{theorem}[Splits-Equivalence Theorem~\cite{Buneman71,SempleSteel03}] \label{thm:SET} Let $\Sigma$ be a collection of non-trivial splits of a label set $X$. Then, $\Sigma = \Sigma(T)$ for some phylogenetic tree $T$ with label set $X$ if and only if the splits in $\Sigma$ are pairwise compatible.  Tree $T$ is unique up to isomorphism.
\end{theorem}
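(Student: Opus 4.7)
The plan is to prove the biconditional in two directions and then handle uniqueness separately; all three parts use the four-intersection characterization of pairwise compatibility stated just before the theorem.

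For the forward direction, I would take any two internal edges $e_1,e_2$ of $T$ with splits $A_1|A_2 = \sigma_{e_1}(T)$ and $B_1|B_2 = \sigma_{e_2}(T)$. Deleting $e_1$ alone splits $T$ into two subtrees, and $e_2$ lies in exactly one of these subtrees; deleting it further splits that side but leaves the other side intact. Reading off the label sets of the three resulting subtrees shows that one of the intersections $A_i \cap B_j$ is necessarily empty, so the two splits are compatible by the stated criterion. Since this holds for every pair, $\Sigma(T)$ is pairwise compatible.

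For the backward direction, I would induct on $|\Sigma|$. The base case $|\Sigma| = 0$ is handled by the star tree on $X$. For the inductive step, pick any split $A|B \in \Sigma$. Pairwise compatibility lets me classify every other split $C|D \in \Sigma$: since one of the four intersections with $A|B$ is empty, one side of $C|D$ lies entirely in $A$ and the complementary side contains $B$ (or vice versa). Grouping these, I obtain two smaller split systems: $\Sigma_A$, consisting of splits restricted to $A \cup \{b\}$ (where $b$ is a fresh label acting as a proxy for all of $B$), and similarly $\Sigma_B$ on $B \cup \{a\}$. I would verify that pairwise compatibility is inherited by both restricted systems, apply induction to obtain trees $T_A$ and $T_B$, and then glue them by identifying the edges incident to $a$ and $b$ into a single internal edge, suppressing the placeholder leaves. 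A short bookkeeping check confirms that the internal edge introduced by the gluing realizes $A|B$, and that each inherited split of $\Sigma_A$ or $\Sigma_B$ corresponds to exactly one internal edge of the glued tree inducing the original split of $X$; conversely, no new internal edges are introduced.

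For uniqueness, I would argue that if $\Sigma(T_1) = \Sigma(T_2)$, then the same split $A|B$ chosen above must be realized by a unique internal edge in each tree, and removing that edge (and suppressing the degree-two vertex) in each reduces the problem to two smaller instances with identical split systems, so induction on $|X|$ gives $T_1 \cong T_2$. The main obstacle is the gluing step in the backward direction: one has to argue carefully that the restricted systems $\Sigma_A$ and $\Sigma_B$ remain pairwise compatible after replacing the $B$-side (resp.\ $A$-side) by a single proxy label, and that the glued tree's internal edges are in bijection with $\Sigma$ with no extras and no omissions. This is essentially a matter of tracking what each edge in the two subtrees contributes to a split of $X$, but it is where the cleanest arguments in the literature (e.g., the Buneman graph construction in \cite{SempleSteel03}) do their real work.
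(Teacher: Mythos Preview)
The paper does not prove this theorem: it is quoted as a classical result with citations to Buneman~\cite{Buneman71} and Semple--Steel~\cite{SempleSteel03}, and no argument is supplied in the paper itself. So there is no ``paper's own proof'' to compare your proposal against.

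That said, your sketch is a reasonable outline of one standard route to the Splits-Equivalence Theorem. The forward direction is fine as written. For the backward direction, your induction on $|\Sigma|$ with a proxy-label decomposition is a legitimate strategy; the classical proofs in the cited references tend instead to go through the Buneman index / Buneman graph construction or through an incremental ``tree-popping'' argument, which avoids the somewhat delicate bookkeeping you flag in the gluing step. One point you should tighten: when you form $\Sigma_A$ on $A\cup\{b\}$, you need $|A|\ge 2$ (true because $A|B$ is non-trivial) so that the restricted splits remain non-trivial, and you should note explicitly that each split of $\Sigma\setminus\{A|B\}$ lands in exactly one of $\Sigma_A,\Sigma_B$, so both have strictly smaller cardinality and the induction terminates. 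The gluing is cleanest phrased as: delete the proxy leaves $a,b$, join their former neighbours by a new edge, and observe that no degree-two vertex is created because each former neighbour had degree at least three in its subtree and loses one incident edge while gaining one. Your uniqueness argument by induction on $|X|$ is also standard and correct in outline.
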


Let $S$ be a phylogenetic tree and let $Y$ be a subset of $\Leaves(S)$.  Then, $S_{|Y}$ denotes the tree obtained by suppressing any degree-two vertices in the minimal subtree of $S$ connecting the labels in $Y$.  Now, let $T$ be a phylogenetic tree such that $\Leaves(T) \subseteq \Leaves(S)$. Then, $S$ \emph{displays} $T$ if and only if $\Sigma(T) \subseteq \Sigma(S_{|\Leaves(T)})$; $T$ and $S$ \emph{agree} if and only if $\Sigma(T) = \Sigma(S_{|\Leaves(T)})$. 
%By the Splits Equivalence Theorem, two phylogenetic trees $T_1$ and $T_2$ are isomorphic if $\Sigma(T_1) = \Sigma(T_2)$. 

\paragraph{Cliques, Separators, Cuts, and Triangulations.}
 Let $G$ be a graph. We represent the vertices and edges of $G$ by $V(G)$ and $E(G)$ respectively. A \emph{clique} of $G$ is a complete subgraph of $G$. A clique $H$ of $G$ is \emph{maximal} if there is no other clique $H'$ of $G$ where $V(H) \subset V(H')$. For any $U \subseteq V(G)$, $G-U$ is the graph derived by removing vertices of $U$ and their incident edges from $G$.  For any $F \subseteq E(G)$, $G-F$ is the graph with vertex set $V(G)$ and edge set $E(G) \setminus F$. 
%For any vertex $v \in V(G)$, we denote the set $\{x: \{x, v\} \in E(G)\}$ by $N_G(v)$.

For any two nonadjacent vertices $a$ and $b$ of $G$, an $a$-$b$ $separator$ of $G$ is a set $U$ of vertices where $U \subset V(G)$ and
%whose removal disconnects $G$ and 
$a$ and $b$ are in different connected components of $G-U$. An $a$-$b$ separator $U$ is \emph{minimal} if for every $U' \subset U$, $U'$ is not an $a$-$b$ separator. A set $U \subseteq V(G)$ is a \emph{minimal separator} if $U$ is a minimal $a$-$b$ separator for some nonadjacent vertices $a$ and $b$ of $G$.  We represent the set of all minimal separators of graph $G$ by $\triangle_G$.  
Two minimal separators $U$ and $U'$ are \emph{parallel} if $G-U$ contains at most one component $H$ where $V(H) \cap U' \neq \emptyset$.

A connected component $H$ of $G-U$ is \emph{full} if for every $u \in U$ there exists some vertex $v \in H$ where $\{u, v\} \in E(G)$.

\begin{lemma}[\cite{Parra1997}]
\label{lm:two_full}
For a graph $G$ and any $U \subset V(G)$, $U$ is a minimal separator of $G$ if and only if $G-U$ has  at least two full components.
\end{lemma}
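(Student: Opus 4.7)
The plan is to prove both implications directly from the definitions, using a single key observation: if $U$ is a minimal $a$-$b$ separator, then deleting any one vertex $u$ from $U$ destroys the separation, so there is an $a$-$b$ path in $G - (U \setminus \{u\})$ that is forced to pass through $u$.

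For the forward direction, I would take a minimal $a$-$b$ separator $U$ and let $H_a$ and $H_b$ be the components of $G - U$ containing $a$ and $b$, respectively. To show $H_a$ is full, I pick an arbitrary $u \in U$, use minimality to obtain an $a$-$b$ path $P$ in $G - (U \setminus \{u\})$, and observe that $P$ must contain $u$, since $U$ itself still separates $a$ and $b$. The portion of $P$ from $a$ up to its first occurrence of $u$ has all internal vertices outside $U$, starts at $a \in H_a$, and is connected, hence lies entirely in $H_a$; its penultimate vertex is then a neighbor of $u$ in $H_a$. Symmetrically, using the portion of $P$ after the last occurrence of $u$, the vertex $u$ has a neighbor in $H_b$. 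Since $u \in U$ was arbitrary, both $H_a$ and $H_b$ are full.

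For the backward direction, I would assume $G - U$ has two full components $H_1$ and $H_2$, and pick $a \in V(H_1)$ and $b \in V(H_2)$. Then $a$ and $b$ lie in different components of $G - U$, so they are nonadjacent in $G$ and $U$ is an $a$-$b$ separator. To check minimality, fix any $u \in U$: fullness supplies neighbors $v_1 \in V(H_1)$ and $v_2 \in V(H_2)$ of $u$, and concatenating an $a$-to-$v_1$ path inside $H_1$, the edges $v_1u$ and $uv_2$, and a $v_2$-to-$b$ path inside $H_2$ yields an $a$-$b$ path in $G - (U \setminus \{u\})$. Hence no proper subset of $U$ separates $a$ from $b$, so $U$ is a minimal $a$-$b$ separator.

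There is no substantive obstacle in this argument; the whole statement reduces to careful bookkeeping of components in $G-U$. The only subtle spot is the forward direction's claim that the initial segment of $P$ before its first visit to $u$ really lies in $H_a$, which needs the observation that any connected subgraph of $G - U$ containing $a$ must be contained in $H_a$ by the definition of a connected component.
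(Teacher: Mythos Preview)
Your argument is correct and is exactly the standard proof of this folklore fact. One small wording point: in the forward direction a path visits each vertex at most once, so ``first occurrence of $u$'' and ``last occurrence of $u$'' coincide; you only need the single vertex $u$ on $P$. And in the backward direction, you only explicitly treat proper subsets of the form $U\setminus\{u\}$, but this is sufficient since any $U'\subsetneq U$ is contained in some $U\setminus\{u\}$, and the $a$--$b$ path you build avoids $U\setminus\{u\}\supseteq U'$.

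As for comparison with the paper: there is nothing to compare. The paper does not prove Lemma~\ref{lm:two_full}; it merely cites it from~\cite{Parra1997} as a known result and uses it as a black box (notably in the proof of Lemma~\ref{lm:two_components}). Your write-up is the usual textbook argument for this characterization of minimal separators.
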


A \emph{chord} is an edge between two nonadjacent vertices of a cycle. A graph $H$ is \emph{chordal} if and only if every cycle of length four or greater in $H$ has a chord. A chordal graph $H$ is a \emph{triangulation} of graph $G$ if $V(G) = V(H)$ and $E(G) \subseteq E(H)$.  The edges in $E(H) \setminus E(G)$ are called \emph{fill-in} edges of $G$.   A triangulation is \emph{minimal} if removing any fill-in edge yields a non-chordal graph.

A \emph{clique tree} of a chordal graph $H$ is a pair $(T, B)$ where (i) $T$ is a tree, (ii) $B$ is a bijective function from vertices of $T$ to maximal cliques of $H$, and (iii) for every vertex $v \in H$, the set of all vertices $x$ of $T$ where $v \in B(x)$ induces a subtree in $T$.  Property (iii) is called \emph{coherence}.

Let $\F$ be a collection of subsets of $V(G)$. We represent by $G_{\F}$ the graph derived from $G$ by making the set of vertices of $X$ a clique in $G$ for every $X \in \F$. The next result summarizes basic facts about separators and triangulations (see~\cite{Bouchitte2001,Heggernes2006,Parra1997}).

\begin{theorem}
\label{thm:parallel_minimal_ct}
Let $\F$ be a maximal set of pairwise parallel minimal separators of $G$ and $H$ be a minimal triangulation of $G$. Then, the following statements hold.
\begin{enumerate}[(i)]
\item $G_{\F}$ is a minimal triangulation of $G$.
\item Let $(T, B)$ be a clique tree of $G_{\F}$. There exists a minimal separator $F \in \F$ if and only if there exist two adjacent vertices $x$ and $y$ in $T$ where $B(x) \cap B(y) = F$.
\item $\triangle_{H}$ is a maximal set of pairwise parallel minimal separators of $G$ and $G_{\triangle_H} = H$.
\end{enumerate}
\end{theorem}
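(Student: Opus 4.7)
The plan is to prove the three statements sequentially, following the Parra--Scheffler theory of minimal triangulations together with the classical Buneman--Gavril clique-tree characterization of chordal graphs. The guiding intuition is that a maximal pairwise parallel family of minimal separators of $G$ is exactly the structure needed to build a clique tree, and that the resulting chordal supergraph is a minimal triangulation; conversely, every minimal triangulation of $G$ arises this way from its own minimal separators.

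For part (i), I would first verify that $G_{\F}$ is chordal by constructing a clique tree directly from $\F$. Using pairwise parallelism, one can organize the vertices of $G$ into ``blocks'' (maximal vertex sets not strictly separated by any $F \in \F$), adjoin the incident separators to form the maximal cliques of $G_{\F}$, and glue them along shared separators into a tree satisfying coherence. To prove minimality of the triangulation, I would show that every $F \in \F$ remains a minimal separator of $G_{\F}$: the cliqueing adds edges only \emph{inside} the separators, so the full components on either side of $F$ survive, and Lemma \ref{lm:two_full} applies. Conversely, every minimal separator of $G_{\F}$ lies in $\F$, by invoking maximality of the parallel family. Then, since removing any fill-in edge from some $F \in \F$ would leave $F$ as a non-clique minimal separator, Dirac's theorem implies that the triangulation is minimal.

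Part (ii) is the standard edge-intersection characterization of minimal separators in a clique tree: for any edge $xy$ of $T$, coherence forces $B(x)\cap B(y)$ to separate the vertices appearing only on each side, giving a minimal separator (necessarily in $\F$ by part (i)); conversely, for any $F \in \F$, removing all tree edges $xy$ with $B(x)\cap B(y)=F$ disconnects $T$, and the union of bags in each resulting subtree yields a full component of $G_{\F}-F$. Part (iii) is essentially dual to (i). I would show that every minimal separator of $H$ is already a minimal separator of $G$, using that fill-in edges incident to such a separator $S$ lie inside cliques attached to $S$ in $H$, so full components of $H-S$ restrict to full components of $G-S$. Pairwise parallelism of $\triangle_H$ can then be read off from any clique tree of $H$. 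Maximality of $\triangle_H$ and the identity $G_{\triangle_H}=H$ follow by applying (i) to $\F=\triangle_H$: the graph $G_{\triangle_H}$ is a chordal supergraph of $G$ contained in $H$, forcing equality because $H$ is already a minimal triangulation.

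The main obstacle lies in part (i), specifically in showing that a maximal pairwise parallel family actually admits the required tree arrangement. This requires a careful case analysis of how two parallel separators relate (the key point being that one lies, up to its own vertices, entirely on one side of the other), and this is where the bulk of the Parra--Scheffler argument is concentrated. Once (i) is established, parts (ii) and (iii) reduce to routine manipulations of clique trees and the separator/fill-in correspondence.
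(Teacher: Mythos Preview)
The paper does not prove this theorem at all: it is stated as background material with the remark ``The next result summarizes basic facts about separators and triangulations (see~\cite{Bouchitte2001,Heggernes2006,Parra1997}).'' There is therefore no proof in the paper to compare your proposal against; the authors simply import the statement from the cited sources.

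That said, your outline is a faithful reconstruction of the Parra--Scheffler argument and is correct at the level of a sketch. The one place where you should be more careful is in part~(i): showing that the ``blocks'' you describe really are the maximal cliques of $G_{\F}$, and that gluing them along separators yields a \emph{tree} rather than an arbitrary graph, is not automatic from pairwise parallelism alone---this is precisely the technical core of~\cite{Parra1997}, and your phrase ``a careful case analysis of how two parallel separators relate'' correctly flags where the real work lies. Parts~(ii) and~(iii) are, as you say, routine once~(i) is in hand, and your reductions are the standard ones.
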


A \emph{cut} in a connected graph $G$ is a subset $F$ of edges of $G$ such that $G-F$ is disconnected. A cut $F$ is \emph{minimal} if there does not exist $F' \subset F$ where  $G-F'$ is disconnected. Note that if $F$ is minimal, $G-F$ has exactly be two connected components. Two minimal cuts $F$ and $F'$ are \emph{parallel} if $G-F$ has at most one connected component $H$ where $E(H) \cap F' \neq \emptyset$. 

\section{Display Graphs and Edge Label Intersection Graphs}

\label{sec:DG_ELIG}

We now introduce the two main notions that we use to characterize compatibility and agreement: the display graph and edge label intersection graph.  We then present some known results about these graphs, along with new results on the relationships between them.  Here and in the rest of the paper, $[m]$ denotes the set $\{1, \dots, m\}$, where $m$ is a non-negative integer.  Since for any phylogenetic tree $T$ there is a bijection between the leaves of $T$ and $\Leaves(T)$, we refer to the leaves of $T$ by their labels.  
 
Let $\Prof = \{T_1, T_2, \cdots, T_k\}$ be a profile. We assume that for any $i, j \in [k]$ such that $i \neq j$, the sets of internal vertices of input trees $T_i$ and $T_j$ are disjoint. The \emph{display graph} of $\Prof$, denoted by $G(\Prof)$, is a graph whose vertex set is $\bigcup_{i \in [k]} V(T_i)$ and edge set is $\bigcup_{j \in [k]} E(T_j)$ (see Fig.~\ref{fig:example}). A vertex $v$ of $G(\Prof)$ is a \emph{leaf} if $v \in \Leaves(\Prof)$. Every other vertex of $G(\Prof)$ is an \emph{internal}. An edge of $G(\Prof)$ is \emph{internal} if its endpoints are both internal. If $H$ is a subgraph of $G(\Prof)$, then $\Leaves(H)$ represents the set of all leaves of $H$. 

A triangulation $G'$ of $G(\Prof)$ is \emph{legal} if it satisfies the following conditions.
\begin{enumerate} [(LT1)]
\item For every clique $C$ of $G'$, if $C$ contains an internal edge, then it cannot contain any other edge of $G(\Prof)$.  
\item There is no fill-in edge in $G'$ with a leaf as an endpoint.
\end{enumerate}

\begin{theorem}[Vakati, Fern\'{a}ndez-Baca~\cite{Vakati11}]
\label{thm:lt}
A profile $\Prof$ of unrooted phylogenetic trees is compatible if and only if $G(\Prof)$ has a legal triangulation.
\end{theorem}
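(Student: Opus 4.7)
The plan is to prove both implications by exploiting the standard equivalence between chordal graphs and tree decompositions.

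For the ``only if'' direction, let $S$ be a compatible supertree of $\Prof = \{T_1, \dots, T_k\}$. Since $S$ displays each $T_i$, there is a contraction map $\phi_i$ whose fibers partition the vertices of $S_{|\Leaves(T_i)}$ into connected blocks in natural bijection with $V(T_i)$. I would define a tree decomposition of $G(\Prof)$ whose underlying tree is essentially $S$ (with pendant edges possibly subdivided to isolate leaves), where each internal vertex $u$ of $S$ receives into its bag every vertex of $T_i$ whose $\phi_i$-block contains $u$, and each leaf $\ell \in \Leaves(\Prof)$ appears only in a single bag whose other members are exactly the parents of $\ell$ in the input trees that contain it. Cliquing each bag yields a chordal supergraph $G'$, which is a triangulation of $G(\Prof)$ by construction. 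Legality then follows: (LT2) holds because every bag containing $\ell$ contains only vertices already adjacent to $\ell$ in $G(\Prof)$, so no fill-in edge is incident to $\ell$; (LT1) holds because an edge of $G(\Prof)$ interior to a single bag must be an internal edge of some $T_i$ joining two internal vertices in the same $\phi_i$-block, and the block structure prevents any other edge of $G(\Prof)$ from sharing that bag.

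For the ``if'' direction, let $G'$ be a legal triangulation of $G(\Prof)$ and let $(T, B)$ be a clique tree of $G'$. The plan is to read off splits from the clique tree. For every edge $\{x, y\}$ of $T$, the set $B(x) \cap B(y)$ is a minimal separator of $G'$, and by (LT2) the induced partition restricts to a non-trivial split of $\Leaves(\Prof)$. Let $\Sigma$ be the collection of such splits. By Theorem~\ref{thm:parallel_minimal_ct}(iii) the separators obtained this way form a pairwise parallel family of minimal separators of $G'$, and this parallelism translates directly into pairwise compatibility of the associated splits in $\Sigma$. The Splits-Equivalence Theorem (Theorem~\ref{thm:SET}) then produces a unique phylogenetic tree $S$ with $\Sigma(S) = \Sigma$. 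To check that $S$ displays each $T_i$, invoke (LT1): for every internal edge $e$ of $T_i$ the two endpoints of $e$ occur together in a bag $B(t)$ in which $e$ is the unique edge of $G(\Prof)$, so an appropriate separator obtained from the clique tree yields a split in $\Sigma$ whose restriction to $\Leaves(T_i)$ equals $\sigma_e(T_i)$. Hence $\Sigma(T_i) \subseteq \Sigma(S_{|\Leaves(T_i)})$, so $S$ displays $T_i$ and $\Prof$ is compatible.

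The main obstacle I anticipate is the reverse direction, in particular the step that recovers $\sigma_e(T_i)$ from a split in $\Sigma$ for every internal input edge $e$. Conditions (LT1) and (LT2) are engineered precisely to block the two failure modes of this recovery: (LT1) prevents an internal input edge from being absorbed into a clique alongside other edges of $G(\Prof)$ that would otherwise destroy its split, while (LT2) prevents leaves of $G(\Prof)$ from being dragged across separators by spurious fill-in. Once the splits are in hand, the remaining verifications reduce to standard clique-tree combinatorics, and the forward and reverse constructions become essentially inverse to one another.
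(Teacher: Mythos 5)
A preliminary remark: this paper does not prove Theorem~\ref{thm:lt} at all --- it imports it from~\cite{Vakati11} --- so there is no in-paper proof to compare against. Your outline (compatible supertree $\to$ tree decomposition $\to$ clique the bags; legal triangulation $\to$ clique-tree separators $\to$ splits $\to$ Theorem~\ref{thm:SET}) is the right overall strategy and matches the original argument in spirit, but both halves have genuine gaps as written. In the forward direction, the bags you describe do not cover the internal edges of the input trees: each $\phi_i$-block is the preimage of a \emph{single} vertex of $T_i$, so the bag at an internal vertex $u$ of $S$ receives at most one vertex from each $T_i$, and no bag ever contains both endpoints of an internal edge $\{a,b\}$ of $T_i$. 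Hence the family of bags is not a tree decomposition of the cliqued-up graph, and chordality of the resulting $G'$ does not follow. (Your justification of (LT1) also inverts the contraction map: the two endpoints of an internal edge of $T_i$ lie in two \emph{different} $\phi_i$-blocks, never the same one, so the case you analyze cannot occur.) The repair is not mere bookkeeping: to cover $\{a,b\}$ you must place both endpoints in bags along the path of $S$ realizing the corresponding edge of $S_{|\Leaves(T_i)}$, and since such paths for different input trees overlap, doing this naively puts internal edges of two distinct trees into one bag and violates (LT1). One has to subdivide the edges of $S$ so that the crossing edges of distinct input trees are opened one at a time; this interleaving is the actual content of the construction and is absent from your sketch.

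In the reverse direction, the step you yourself identify as the main obstacle is asserted rather than proved. You claim that some clique-tree separator adjacent to the bag containing the internal edge $e=\{a,b\}$ of $T_i$ induces a split whose restriction to $\Leaves(T_i)$ equals $\sigma_e(T_i)$, but you give no argument that the leaves of the two subtrees of $T_i-e$ land on opposite sides of that separator, nor that neither subtree is itself cut apart by it; this is exactly where (LT1), (LT2) and coherence of the clique tree must be combined (compare the work done in Lemma~\ref{lm:two_subtrees} and in the proof of Lemma~\ref{lm:maximal_seps_maximal_cuts} for the analogous separator/cut statements in this paper). You also need to rule out leaves lying inside the chosen separators and to discard the trivial splits arising at pendant bags before invoking Theorem~\ref{thm:SET}; (LT2) makes this possible, but not ``directly.'' Until these points are discharged, neither implication is established.
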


In what follows, we assume that $G(\Prof)$ is connected. If it is not, the connected components of $G(\Prof)$ induce a partition of $\Prof$ into sub-profiles such that for each sub-profile $\Prof'$, $G(\Prof')$ is a connected component of $G(\Prof)$. It is easy to see that $\Prof$ is compatible if and only if each sub-profile is compatible. 

The \emph{edge label intersection graph of $\Prof$}, denoted $\LG(\Prof)$, is the line graph of $G(\Prof)$~\cite{Gysel2012}.\footnote{Note that Gysel et al.~refer to $\LG(\Prof)$ as the modified edge label intersection graph~\cite{Gysel2012}.} That is, the vertex set of $\LG(G)$ is $E(G(\Prof))$ and two vertices of $\LG(\Prof)$ are adjacent if the corresponding edges in $G(\Prof)$ share an endpoint. For an unrooted tree $T$, $\LG(T)$ denotes $\LG(\{T\})$. 

%The following useful fact is easy to prove.

\begin{observation}\label{obs:path}
Let $F$ be a set of edges of $G(\Prof)$ and let $\{v_1, v_2,\dots,v_m\} \subseteq V(G(\Prof))$ where $m \geq 2$. Then, $v_1, v_2, \dots, v_m$ is a path in $G(\Prof)-F$ if and only if $\{v_1, v_2\},$ $ \dots,$ $\{v_{m-1}, v_m\}$ is a path in in $\LG(\Prof)-F$.
\end{observation}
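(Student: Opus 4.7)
The plan is to prove both directions by unpacking the definitions of the line graph and of a path, and relying on the bijection between $E(G(\Prof))$ and $V(\LG(\Prof))$. Throughout, recall that removing the edge set $F$ from $G(\Prof)$ corresponds, on the line-graph side, to removing the vertex set $F \subseteq V(\LG(\Prof))$.

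For the forward direction, I would assume $v_1,v_2,\dots,v_m$ is a path in $G(\Prof)-F$. Then each $\{v_i,v_{i+1}\}$ is an edge of $G(\Prof)$ not contained in $F$, hence a vertex of $\LG(\Prof)-F$. Consecutive edges $\{v_i,v_{i+1}\}$ and $\{v_{i+1},v_{i+2}\}$ share the endpoint $v_{i+1}$, so by definition of the line graph they are adjacent in $\LG(\Prof)$, and neither is in $F$, so the adjacency persists in $\LG(\Prof)-F$. Finally, I need to verify that the edges $\{v_i,v_{i+1}\}$ are pairwise distinct as vertices of $\LG(\Prof)$: for $i<j$, equality would force either $v_i=v_j$ or $v_i=v_{j+1}$, both contradicting the distinctness of the $v_i$'s on a path.

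For the reverse direction, assume $\{v_1,v_2\},\dots,\{v_{m-1},v_m\}$ is a path in $\LG(\Prof)-F$. Each such vertex is, by definition, an edge of $G(\Prof)$ not in $F$. It remains to show that $v_1,\dots,v_m$ is actually a path in $G(\Prof)-F$, i.e., that consecutive edges share precisely $v_{i+1}$ and that the $v_i$'s are distinct. Adjacency in $\LG(\Prof)$ guarantees that $\{v_i,v_{i+1}\}$ and $\{v_{i+1},v_{i+2}\}$ share some endpoint; the only alternative to $v_{i+1}$ is $v_i=v_{i+2}$, but then $\{v_i,v_{i+1}\}=\{v_{i+1},v_{i+2}\}$, contradicting the fact that these are distinct vertices along a path in $\LG(\Prof)-F$. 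Distinctness of the $v_i$'s is built into the hypothesis that $\{v_1,\dots,v_m\}$ is a (set of) vertices of $G(\Prof)$.

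The argument is essentially bookkeeping, so I do not expect a serious obstacle; the only subtlety is the edge-distinctness check on each side, which ensures that a walk is in fact a path under the line-graph correspondence. This observation will later be the workhorse for transferring statements about cuts in $G(\Prof)$ to statements about separators in $\LG(\Prof)$.
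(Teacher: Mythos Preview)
Your argument is correct; the paper states this as an observation without proof, so there is nothing to compare against. One minor remark: in the reverse direction, the step where you argue that consecutive edges must share precisely $v_{i+1}$ is superfluous, since the two edges are literally written as $\{v_i,v_{i+1}\}$ and $\{v_{i+1},v_{i+2}\}$ with the same $v_{i+1}\in V(G(\Prof))$; the real work is only in checking that the $v_i$ are distinct, which, as you note, is guaranteed by the set hypothesis $\{v_1,\dots,v_m\}\subseteq V(G(\Prof))$.
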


Thus, if $G(\Prof)$ is connected, so is $\LG(\Prof)$. Hence, in what follows, we assume that $\LG(\Prof)$ is connected.

A fill-in edge for $\LG(\Prof)$ is \emph{valid} if  for every $T \in \Prof$, at least one of the endpoints of the edge is not in $\LG(T)$. A triangulation $H$ of $\LG(\Prof)$ is \emph{restricted} if every fill-in edge of $H$ is valid.

\begin{theorem}[Gysel, Stevens, and Gusfield~\cite{Gysel2012}]
\label{thm:EL_compatibility}
A profile $\Prof$ of unrooted phylogenetic trees is compatible if and only if $\LG(\Prof)$ has a restricted triangulation.
\end{theorem}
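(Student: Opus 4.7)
The plan is to deduce this theorem from Theorem~\ref{thm:lt} by constructing, from a legal triangulation of $G(\Prof)$, a restricted triangulation of $\LG(\Prof)$ and vice versa. The key structural idea is that clique trees of the two triangulations can be built from each other: one direction replaces each bag of $G(\Prof)$-vertices by the set of $G(\Prof)$-edges with both endpoints in it, while the other replaces each bag of $G(\Prof)$-edges by the union of their endpoints.

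For the forward direction, I would start with a legal triangulation $G'$ and its clique tree $(T_{G'}, B_{G'})$. I would build $H$ by letting $B_H(x) = \{e \in E(G(\Prof)) : V(e) \subseteq B_{G'}(x)\}$ and then attaching, for each $v \in V(G(\Prof))$, an extra leaf node carrying the star $E_v = \{e : v \in V(e)\}$; this ensures $H \supseteq \LG(\Prof)$, since any two edges sharing an endpoint $v$ co-occur in the star bag. Coherence would follow from coherence of the original clique tree together with the fact that each edge $e = \{u, w\}$ indexes a subtree whose original part, $\{x : \{u,w\} \subseteq B_{G'}(x)\}$, is itself a subtree intersecting the new star nodes of $u$ and $w$. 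To check restrictedness, I would take a fill-in edge $\{e_1, e_2\}$ of $H$ with disjoint endpoints $\{a, b\}$ and $\{c, d\}$ and suppose both lie in a single tree $T$: the four endpoints sit in a common clique of $G'$, (LT1) forbids either from being internal in $T$, so each has a leaf endpoint of $T$, say $a$ and $d$; then the $G'$-edge $\{a, c\}$ is incident to the leaf $a$, so by (LT2) it belongs to $E(G(\Prof))$, forcing it into some tree $T' \neq T$ with $c$ internal in $T'$. Since $c$ is also internal in $T$ (as the non-leaf endpoint of $e_2$), this contradicts the disjointness of internal-vertex sets across distinct input trees.

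For the reverse direction, I would take a restricted triangulation $H$ with clique tree $(T_H, B_H)$ and set $B_{G'}(x) = \bigcup_{e \in B_H(x)} V(e)$. Coherence of the resulting decomposition would follow from the Helly property for subtrees: for each $v$, the family $\{S_e\}_{e \in E_v}$ of subtrees of $T_H$ pairwise intersects (since $E_v$ is a clique of $\LG(\Prof) \subseteq H$) and hence shares a common node. This makes $G'$ a chordal triangulation of $G(\Prof)$. Verifying (LT1) would require showing that if a clique of $G'$ contains both an internal edge $e$ and another $G(\Prof)$-edge $f$, then $e$ and $f$ co-occur in a maximal clique of $H$, and the resulting $H$-edge, being either an edge of $\LG(\Prof)$ or a fill-in edge forbidden by restrictedness, forces a contradiction with the uniqueness of the tree containing the internal edge. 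Verification of (LT2) would proceed by analogous analysis of fill-in $G'$-edges incident to leaves, again exploiting that leaves have a unique neighbor in each tree in which they appear.

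The main obstacle will be the reverse direction, specifically the translation of restrictedness --- a tree-by-tree condition on pairs of $\LG(\Prof)$-vertices --- into the structural conditions (LT1) and (LT2) on cliques and fill-in edges of $G'$. The bookkeeping is delicate: while internal vertices are confined to a single input tree, leaves may be shared across many trees, so one has to carefully track which tree each $G(\Prof)$-edge in the relevant $H$-bags belongs to, and leverage disjointness of internal vertex sets across distinct trees to close each case.
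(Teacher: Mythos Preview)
The paper does not prove Theorem~\ref{thm:EL_compatibility}; it is quoted from Gysel, Stevens, and Gusfield and used as a black box (for instance, in the proof of Theorem~\ref{thm:edge_label}). So there is no ``paper's own proof'' to compare against, and your plan to derive it from Theorem~\ref{thm:lt} is already a different route from anything in this paper.

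Your forward direction is essentially sound. One small wording issue: after you deduce that $c$ is internal in $T$, you do not need to say ``$c$ is internal in $T'$''. The cleaner conclusion is that an internal vertex of $T$ belongs to no other input tree at all, so the edge $\{a,c\}\in E(G(\Prof))$ must lie in $T$; but $a$ is a leaf of $T$ with unique neighbour $b\neq c$, a contradiction.

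The reverse direction, however, has a genuine gap. Your bag-union map $B_{G'}(x)=\bigcup_{e\in B_H(x)} V(e)$ does not in general produce a \emph{legal} triangulation: it can violate both (LT1) and (LT2), even when the profile is trivially compatible. Take $\Prof=\{T\}$ with $T$ the path $a\,\text{--}\,u\,\text{--}\,v\,\text{--}\,b$ (leaves $a,b$, internal vertices $u,v$). Then $\LG(T)$ is itself chordal, so $H=\LG(T)$ is a restricted triangulation with clique tree consisting of two bags $\{\{a,u\},\{u,v\}\}$ and $\{\{u,v\},\{v,b\}\}$. Your rule gives $B_{G'}$-bags $\{a,u,v\}$ and $\{u,v,b\}$, hence $G'$ contains the fill-in edges $\{a,v\}$ and $\{u,b\}$, each incident to a leaf, contradicting (LT2); and the clique $\{a,u,v\}$ contains the internal edge $\{u,v\}$ together with $\{a,u\}$, contradicting (LT1). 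Even simpler, a star with three leaves and one internal vertex yields $H=K_3$ with a single bag, and your $G'$ becomes $K_4$, again violating (LT2). The difficulty is intrinsic: whenever two incident edges $\{l,n\}$ and $\{n,w\}$ share a bag of $H$ (which they must, being adjacent in $\LG(\Prof)$), the leaf $l$ and the vertex $w$ land in the same $G'$-bag and become adjacent. So the translation ``union of endpoints'' is too coarse; you would need a finer decomposition (for example, one that inserts separate bags for the star $\Inc(v)$ of each vertex $v$ on the $G(\Prof)$ side and controls how leaves enter cliques), or else abandon the route through Theorem~\ref{thm:lt} for this direction and argue, as Gysel et al.\ do, that a restricted triangulation of $\LG(\Prof)$ yields a compatible supertree directly.
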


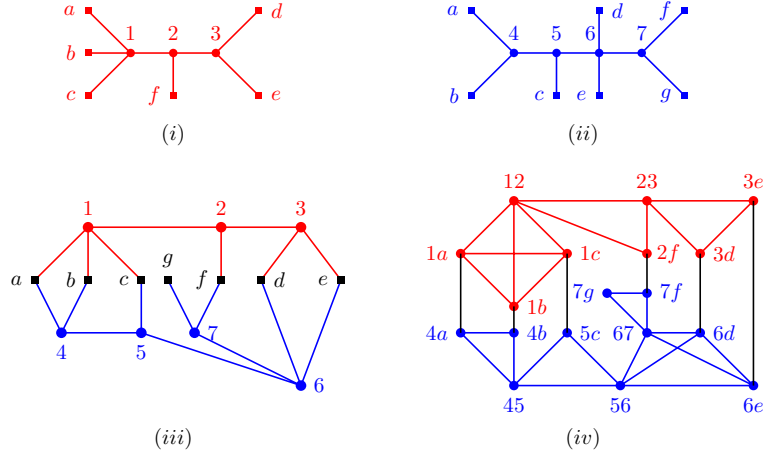
\begin{figure}[t!]
\renewcommand{\figurename}{Fig.}
 \begin{center}
 %!TEX root = cuts_modified.tex 
\begin{tikzpicture} [inner sep = 2mm, semithick,scale=0.7]
\tikzstyle{every node}=[font=\fontsize{10}{10}, scale=0.8];
%\draw [help lines] (0,0) grid (14,9);

\coordinate (one_a) at (1,9);
\begin{scope}[red,scale=0.8]
\draw (one_a) --++(1, -1) coordinate(one_1) --++(-1, 0) coordinate(one_b) ++(1,0) --++(-1,-1) coordinate(one_c)  ++(1,1) --++(1,0) coordinate(one_2) --++(0,-1) coordinate(one_f) ++(0,1) --++(1,0) coordinate (one_3) --++(1,1) coordinate(one_d) ++(-1,-1) --+(1,-1) coordinate(one_e);
%\begin{scope}
%\tikzstyle {every node}=[font=\fontsize{10}{10}];
\path [fill] (one_1) circle(0.1) node[above] {$1$};
\path [fill] (one_2) circle(0.1) node[above] {$2$};
\path [fill] (one_3) circle(0.1) node[above] {$3$};
%\end{scope}
\path [fill] (one_a) node[left] {$a$} ++(-0.075,-0.075)  rectangle +(0.15,0.15);
\path [fill] (one_b) node[left] {$b$} ++(-0.075,-0.075)  rectangle +(0.15,0.15);
\path [fill] (one_c) node[left] {$c$} ++(-0.075,-0.075)  rectangle +(0.15,0.15);
\path [fill] (one_d) node[right]{$d$}++(-0.075,-0.075)  rectangle +(0.15,0.15);
\path [fill] (one_e) node[right]{$e$}++(-0.075,-0.075)  rectangle +(0.15,0.15);
\path [fill] (one_f) node[left]{$f$}++(-0.075,-0.075)  rectangle +(0.15,0.15);
\end{scope}

\path(one_f) ++(0, -0.75) coordinate(i)node{$(i)$};
\path (one_d) ++(4,0) coordinate(two_a);
\begin{scope}[blue,scale=0.8]
\draw (two_a) --++(1,-1) coordinate(two_4) --++(-1,-1) coordinate(two_b) ++(1,1) --++(1,0) coordinate(two_5)--++(0,-1) coordinate(two_c) ++(0,1) --++(1,0) coordinate(two_6) --++(0,1) coordinate(two_d) ++(0,-2)  coordinate(two_e)--++(0, 1) --++(1,0) coordinate(two_7) --++(1,1) coordinate(two_f) ++(-1,-1) --+(1,-1) coordinate(two_g);
\begin{scope}
%\tikzstyle {every node}=[font=\fontsize{10}{10}];
\path [fill] (two_4) circle(0.1) node[above] {$4$};
\path [fill] (two_5) circle(0.1) node[above] {$5$};
\path [fill] (two_6) circle(0.1) +(-0.2,0) node[above] {$6$};
\path [fill] (two_7) circle(0.1) node[above] {$7$};
\end{scope}
\path [fill] (two_a) node[left] {$a$} ++(-0.075,-0.075)  rectangle +(0.15,0.15);
\path [fill] (two_b) node[left] {$b$} ++(-0.075,-0.075)  rectangle +(0.15,0.15);
\path [fill] (two_c) node[left] {$c$} ++(-0.075,-0.075)  rectangle +(0.15,0.15);
\path [fill] (two_d) node[right] {$d$} ++(-0.075,-0.075)  rectangle +(0.15,0.15);
\path [fill] (two_e) node[left] {$e$} ++(-0.075,-0.075)  rectangle +(0.15,0.15);
\path [fill] (two_f) node[left] {$f$} ++(-0.075,-0.075)  rectangle +(0.15,0.15);
\path [fill] (two_g) node[left] {$g$} ++(-0.075,-0.075)  rectangle +(0.15,0.15);
\end{scope}
\path (two_c) ++(0.5, -0.75) coordinate(ii)node{$(ii)$};
\path (one_c) ++(0,-2.5) coordinate(dis_1);

\begin{scope}[red]
\draw (dis_1) --++(-1, -1) coordinate(dis_a)  ++(1, 0) coordinate(dis_b) --++(0,1)--++(1,-1) coordinate(dis_c) ++(-1,1) --++(2.5,0) coordinate(dis_2) --++(0,-1) ++(0,1) --++(1.5,0) coordinate(dis_3)--++(-0.75,-1)coordinate(dis_d) ++(1.5,0) coordinate(dis_e) -- (dis_3);
\begin{scope}
%\tikzstyle {every node}=[font=\fontsize{10}{10}];
\path [fill] (dis_1) circle(0.1) node[above] {$1$};
\path [fill] (dis_2) circle(0.1) node[above] {$2$};
\path [fill] (dis_3) circle(0.1) node[above] {$3$};
\end{scope}

\end{scope}

\begin{scope}[blue]
\draw (dis_a) --++(0.5,-1) coordinate(dis_4)--++(0.5,1) ++(-0.5,-1) --++(1.5,0) coordinate(dis_5)--++(0,1) ++(0,-1)--++(3,-1) coordinate(dis_6) --++(-2,1) coordinate(dis_7) --++(-0.5,1)coordinate(dis_g) ++(1,0) coordinate(dis_f) --++(-0.5,-1) (dis_6)--(dis_d) (dis_6)--(dis_e);

\begin{scope}
%\tikzstyle {every node}=[font=\fontsize{10}{10}];
\path [fill] (dis_4) circle(0.1) node[below] {$4$};
\path [fill] (dis_5) circle(0.1) node[below] {$5$};
\path [fill] (dis_6) circle(0.1) node[right] {$6$};
\path [fill] (dis_7) circle(0.1) node[right] {$7$};
\end{scope}

\end{scope}

\begin{scope}[black]
\path [fill] (dis_a) node[left] {$a$} ++(-0.075,-0.075)  rectangle +(0.15,0.15);
\path [fill] (dis_b) node[left] {$b$} ++(-0.075,-0.075)  rectangle +(0.15,0.15);
\path [fill] (dis_c) node[left] {$c$} ++(-0.075,-0.075)  rectangle +(0.15,0.15);
\path [fill] (dis_d) node[right] {$d$} ++(-0.075,-0.075)  rectangle +(0.15,0.15);
\path [fill] (dis_e) node[left] {$e$} ++(-0.075,-0.075)  rectangle +(0.15,0.15);
\path [fill] (dis_f) node[left] {$f$} ++(-0.075,-0.075)  rectangle +(0.15,0.15);
\path [fill] (dis_g) node[above] {$g$} ++(-0.075,-0.075)  rectangle +(0.15,0.15);
\end{scope}

\path (i) ++(0, -5.75) node{$(iii)$};
\path (dis_3) ++(4,0.5) coordinate(el_12);

\begin{scope}[red]
\draw (el_12)--++(-1, -1) coordinate(el_1a) --++(1,-1) coordinate(el_1b) --++(1, 1) coordinate(el_1c) --(el_12) --(el_1b) (el_1a)--(el_1c);
\draw (el_12)--++(2.5,-1) coordinate(el_2f)--++(0,1) coordinate(el_23) -- (el_12);
\draw (el_23)--++(1,-1) coordinate(el_3d)--++(1,1) coordinate(el_3e)--(el_23);

\begin{scope}
%\tikzstyle {every node}=[font=\fontsize{9}{9}];
\path [fill] (el_12) circle(0.09) node[above]{$12$};
\path [fill] (el_1a) circle(0.09) node[left]{$1a$};
\path [fill] (el_1c) circle(0.09) node[right]{$1c$};
\path [fill] (el_1b) circle(0.09) node[right]{$1b$};
\path [fill] (el_23) circle(0.09) node[above]{$23$};
\path [fill] (el_2f) circle(0.09) node[right, inner sep=1.5mm]{$2f$};
\path [fill] (el_3e) circle(0.09) node[above]{$3e$};
\path [fill] (el_3d) circle(0.09) node[right]{$3d$};
\end{scope}

\end{scope}

\path (el_1a) ++(0,-1.5) coordinate(el_4a);
\begin{scope}[blue]
\draw (el_4a) --++(1,0)coordinate(el_4b)--++(0,-1) coordinate(el_45) --(el_4a);
\draw (el_45)--++(1,1)coordinate(el_5c)--++(1,-1) coordinate(el_56) --(el_45);
\draw (el_56)--++(0.5,1)coordinate(el_67)--++(1,0) coordinate(el_6d)--++(1,-1)coordinate(el_6e) -- (el_56) -- (el_6d) (el_67)--(el_6e);
\draw (el_67) --++(0, 0.75)coordinate(el_7f)--++(-0.75,0) coordinate(el_7g)--(el_67);

\begin{scope}
%\tikzstyle {every node}=[font=\fontsize{9}{9}];
\path [fill] (el_4a) circle(0.09) node[left]{$4a$};
\path [fill] (el_4b) circle(0.09) node[right]{$4b$};
\path [fill] (el_45) circle(0.09) node[below]{$45$};
\path [fill] (el_5c) circle(0.09) node[right]{$5c$};
\path [fill] (el_56) circle(0.09) node[below]{$56$};
\path [fill] (el_67) circle(0.09) node[left]{$67$};
\path [fill] (el_7g) circle(0.09) node[left]{$7g$};
\path [fill] (el_7f) circle(0.09) node[right]{$7f$};
\path [fill] (el_6d) circle(0.09) node[right]{$6d$};
\path [fill] (el_6e) circle(0.09) node[below]{$6e$};
\end{scope}
\end{scope}

\begin{scope}[black]
\draw (el_1a)--(el_4a);
\draw (el_1b)--(el_4b);
\draw (el_1c)--(el_5c);
\draw (el_3d)--(el_6d);
\draw (el_3e)--(el_6e);
\draw (el_2f)--(el_7f);
\end{scope}

\path (ii) ++(0,-5.75) node{$(iv)$};

\end{tikzpicture}
 \end{center}
\vspace{-0.3cm}
\caption{\emph{(i)} First input tree. \emph{(ii)} A second input tree, compatible with the first. \emph{(iii)} Display graph of the input trees. \emph{(iv)} Edge label intersection graph of the input trees; for every vertex, $uv$ represents edge $\{u,v\}$.}
 \label{fig:example}
\end{figure}

A minimal separator $F$ of $\LG(\Prof)$ is \emph{legal} if for every $T \in \Prof$, all the edges of $T$ in $F$ share a common endpoint; i.e., $F \cap E(T)$ is a clique in $\LG(T)$. 
The following theorem was mentioned in~\cite{Gysel2012}.  
%For future reference, we formally state it and prove it here.

\begin{theorem}\label{thm:edge_label}
A profile $\Prof$ is compatible if and only if there exists a maximal set $\F$ of pairwise parallel minimal separators in $\LG(\Prof)$ where every separator in $\F$ is legal.
\end{theorem}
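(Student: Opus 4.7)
The plan is to combine Theorem~\ref{thm:EL_compatibility} with Theorem~\ref{thm:parallel_minimal_ct}, which bijectively relates minimal triangulations of $\LG(\Prof)$ to maximal sets of pairwise parallel minimal separators. The proof will show that, under this bijection, restricted minimal triangulations correspond exactly to those families in which every separator is legal.

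The first step is to observe that the restricted triangulation produced by Theorem~\ref{thm:EL_compatibility} may be assumed to be minimal: if $H \subseteq H_0$ is any minimal triangulation obtained by deleting fill-in edges from a restricted triangulation $H_0$, then every fill-in edge of $H$ is already a valid fill-in edge of $H_0$, so $H$ is restricted. Hence $\Prof$ is compatible iff $\LG(\Prof)$ admits a minimal restricted triangulation.

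Next, fix a maximal set $\F$ of pairwise parallel minimal separators of $\LG(\Prof)$ and the corresponding minimal triangulation $H = \LG(\Prof)_{\F}$, using Theorem~\ref{thm:parallel_minimal_ct}(i,iii) to move freely between the two descriptions. The fill-in edges of $H$ are exactly the pairs $\{e,e'\} \notin E(\LG(\Prof))$ such that $e,e' \in F$ for some $F \in \F$. The central equivalence to establish is: every fill-in edge of $H$ arising from $F \in \F$ is valid iff $F$ is legal. A fill-in edge $\{e,e'\}$ contained in $F$ fails validity precisely when there is some $T \in \Prof$ with $e,e' \in E(T)$ but $\{e,e'\} \notin E(\LG(T))$; $F$ fails legality precisely when for some $T \in \Prof$ there exist $e, e' \in F \cap E(T)$ that do not share an endpoint in $T$, i.e., are nonadjacent in $\LG(T) \subseteq \LG(\Prof)$. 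These two failure conditions are identical, so $H$ is restricted iff every $F \in \F$ is legal.

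Chaining the two reductions then gives both directions of the theorem. The main obstacle is purely notational: correctly translating between ``adjacency in $\LG(\Prof)$'' and ``sharing a vertex in $G(\Prof)$'' while keeping track of which input tree $T$ each edge of $F$ belongs to. Once Theorem~\ref{thm:parallel_minimal_ct} is invoked, the argument reduces to unwinding the definitions of restricted triangulation and legal separator and observing that they match fill-in edge by fill-in edge.
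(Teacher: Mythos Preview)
Your proposal is correct and follows essentially the same route as the paper's proof: both reduce to Theorem~\ref{thm:EL_compatibility}, pass to a minimal restricted triangulation, invoke Theorem~\ref{thm:parallel_minimal_ct} to translate between minimal triangulations and maximal families of pairwise parallel minimal separators, and then check that ``restricted'' for $\LG(\Prof)_{\F}$ matches ``every $F\in\F$ is legal'' by unwinding the definitions. The only difference is cosmetic---the paper argues the two directions separately by contradiction, whereas you state the correspondence between invalid fill-in edges and illegal separators as a single equivalence.
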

\begin{proof}
Our approach is similar to the one used by Gusfield in~\cite{Gusfield10}.
Assume that $\Prof$ is compatible. From Theorem~\ref{thm:EL_compatibility}, there exists a restricted triangulation $H$ of $\LG(\Prof)$. We can assume that $H$ is minimal (if it is not, simply  delete fill-in edges repeatedly from $H$ until it is minimal). Let $\F = \triangle_{H}$. From Theorem~\ref{thm:parallel_minimal_ct}, $\F$ is a maximal set of pairwise parallel minimal separators of $\LG(\Prof)$ and $\LG(\Prof)_{\F}=H$. Suppose $\F$ contains a separator $F$ that is not legal. Let $\{e, e'\} \subseteq F$ where $\{e, e'\} \subseteq E(T)$ for some input tree $T$ and $e \cap e' = \emptyset$. The vertices of $F$ form a clique in $H$. Thus, $H$ contains  the edge $\{e, e'\}$. Since $\{e, e'\}$ is not a valid edge, $H$ is not a restricted triangulation, a contradiction. Hence, every separator in $\F$ is legal.

Let $\F$ be a maximal set of pairwise parallel minimal separators of $\LG(\Prof)$ where every separator in $\F$ is legal. From Theorem~\ref{thm:parallel_minimal_ct}, $\LG(\Prof)_{\F}$ is a minimal triangulation of $\LG(\F)$. If $\{e, e'\} \in E(\LG(\Prof)_{\F})$ is a fill-in edge, then $e \cap e' = \emptyset$ and there exists a minimal separator $F \in \F$ where $\{e, e'\} \subseteq F$. Since $F$ is legal, if $\{e, e'\} \subseteq E(T)$ for some input tree $T$ then $e \cap e' \neq \emptyset$. Thus,  $e$ and $e'$ are not both from $\LG(T)$ for any input tree $T$. Hence, every fill-in edge in $\LG(\Prof)_{\F}$ is valid, and $\LG(\Prof)_{\F}$ is a restricted triangulation. \qed
\end{proof}

%Let $\Prof = \{T_1, T_2, \cdots, T_k\}$ be a profile of phylogenetic trees. A minimal separator $F$ of $\LG(\Prof)$ is \emph{legal} if for every $F' \subseteq F$ such that $F' \subseteq V(\LG(T))$ for some input tree $T$, $F'$ is a clique in $\LG(T)$. 

Let $u$ of be a vertex of some input tree, Then, $\Inc(u)$ is the set of all edges of $G(\Prof)$ incident on $u$.  Equivalently, $\Inc(u)$ is the set of all vertices $e$ of $\LG(\Prof)$ such that $u \in e$.

Let $F$ be a cut of the display graph $G(\Prof)$.  $F$ is \emph{legal} if for every tree $T \in \Prof$, the edges of $T$ in $F$ are incident on a common vertex; i.e., if $F \cap E(T) \subseteq \Inc(u)$ for some $u \in V(T)$. $F$ is \emph{nice} if $F$ is legal and each connected component of $G(\Prof)-F$ has at least one edge.

\begin{lemma}\label{lm:cuts_seps}
Let $F$ be a subset of $E(G(\Prof))$. Then, $F$ is a legal minimal separator of $\LG(\Prof)$ if and only if $F$ is a nice minimal cut of $G(\Prof)$.
\end{lemma}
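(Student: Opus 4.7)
The plan is to decouple the ``legality'' equivalence from the ``minimal separator vs.\ nice minimal cut'' equivalence, and handle each in turn. For legality, the observation is that every input tree $T$ is triangle-free, so in $\LG(T)$ any set of pairwise-adjacent vertices (i.e., pairwise-incident edges of $T$) must share a common endpoint. Thus $F \cap E(T)$ is a clique of $\LG(T)$ if and only if $F \cap E(T) \subseteq \Inc(u)$ for some $u \in V(T)$; legal separator and legal cut therefore coincide, independent of the structural part. The remainder of the argument is driven entirely by Observation~\ref{obs:path} and Lemma~\ref{lm:two_full}.

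For the ($\Rightarrow$) direction, suppose $F$ is a minimal separator of $\LG(\Prof)$. By Lemma~\ref{lm:two_full}, $\LG(\Prof) - F$ has at least two full components $C_1, C_2$. I would let $K_i$ be the subgraph of $G(\Prof) - F$ spanned by the endpoints of edges in $C_i$; Observation~\ref{obs:path} makes each $K_i$ a connected component of $G(\Prof) - F$, and $K_1 \neq K_2$ because they correspond to different components of $\LG(\Prof) - F$. Translating fullness: every $e \in F$ shares an endpoint with some edge of $C_i$, hence has an endpoint in $V(K_i)$. Since $V(K_1) \cap V(K_2) = \emptyset$, the two endpoints of every $e \in F$ split one-to-one between $V(K_1)$ and $V(K_2)$. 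Combined with connectedness of $G(\Prof)$, this rules out any additional component or isolated vertex of $G(\Prof) - F$, since such a vertex $w$ would have all its $G(\Prof)$-edges in $F$, yet no edge of $F$ can have an endpoint outside $V(K_1) \cup V(K_2)$. Hence $G(\Prof) - F$ has exactly two components $K_1, K_2$, both edge-bearing, so $F$ is a nice cut. Minimality follows immediately: for any $F' \subsetneq F$, any $e \in F \setminus F'$ survives in $G(\Prof) - F'$ and bridges $K_1$ to $K_2$, reconnecting the graph.

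For ($\Leftarrow$), suppose $F$ is a nice minimal cut, so $G(\Prof) - F$ has exactly two edge-bearing components $K_1, K_2$. Choose $e_i \in E(K_i)$; since $V(K_1) \cap V(K_2) = \emptyset$ the two edges share no endpoint, so they are non-adjacent in $\LG(\Prof)$. Any hypothetical $e_1$-$e_2$ path in $\LG(\Prof) - F$ would, by Observation~\ref{obs:path}, yield a vertex path in $G(\Prof) - F$ from an endpoint of $e_1$ to an endpoint of $e_2$, contradicting the split into $K_1, K_2$; hence $F$ separates $e_1$ from $e_2$. For minimality, if $F' \subsetneq F$ then $G(\Prof) - F'$ is connected, and a vertex path joining an endpoint of $e_1$ to one of $e_2$ lifts via Observation~\ref{obs:path} to an $e_1$-$e_2$ path in $\LG(\Prof) - F'$, after prepending $e_1$ and appending $e_2$ (which remain in $\LG(\Prof) - F'$ since $e_1, e_2 \notin F$).

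The main, albeit mild, obstacle is the bookkeeping in the ($\Rightarrow$) direction that uses fullness together with connectedness of $G(\Prof)$ to exclude any third component or isolated vertex of $G(\Prof) - F$; all other steps reduce to essentially mechanical applications of Observation~\ref{obs:path}.
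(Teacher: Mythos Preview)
Your proposal is correct and takes a somewhat different organizational route from the paper. The paper first proves two auxiliary results---Lemma~\ref{lm:no_loner} (no minimal separator of $\LG(\Prof)$ contains all of $\Inc(u)$) and Lemma~\ref{lm:two_components} (any minimal separator leaves exactly two components)---and then invokes them together with Observation~\ref{obs:path} to establish the $(\Rightarrow)$ direction. You bypass both auxiliaries by working directly with Lemma~\ref{lm:two_full}: from the two full components $C_1,C_2$ of $\LG(\Prof)-F$ you read off that every $e\in F$ has one endpoint in each corresponding component $K_1,K_2$ of $G(\Prof)-F$, and this single structural fact simultaneously yields niceness (no isolated vertex), the two-component conclusion, and minimality. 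Your packaging is more economical; the paper's buys reusable standalone statements (Lemmas~\ref{lm:no_loner} and~\ref{lm:two_components}) about $\LG(\Prof)$ that it does not actually reuse.

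One small wording issue in your $(\Rightarrow)$ direction: the sentence ``such a vertex $w$ would have all its $G(\Prof)$-edges in $F$'' is literally true only for an isolated $w$, not for a vertex in a hypothetical third edge-bearing component. The fix is the one you already have in hand: since every edge of $F$ has both endpoints in $V(K_1)\cup V(K_2)$, and (as you implicitly use) every edge of $G(\Prof)-F$ touching $V(K_1)\cup V(K_2)$ lies in $K_1$ or $K_2$, the set $V(K_1)\cup V(K_2)$ is closed under $G(\Prof)$-adjacency; connectedness of $G(\Prof)$ then forces $V(K_1)\cup V(K_2)=V(G(\Prof))$. Stating it this way closes the gap cleanly.
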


To prove the Lemma~\ref{lm:cuts_seps}, we need two auxiliary lemmas and a corollary.

\begin{lemma}\label{lm:no_loner}
Let $F$ be any minimal separator of $\LG(\Prof)$ and $u$ be any vertex of any input tree. Then, $\Inc(u) \not \subseteq F$.
\end{lemma}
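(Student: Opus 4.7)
The plan is to argue by contradiction: assume $\Inc(u) \subseteq F$ for some vertex $u$ of an input tree and derive a contradiction to $F$ being a minimal separator of $\LG(\Prof)$. The crucial structural fact I will exploit is that $\Inc(u)$ is a clique in $\LG(\Prof)$ by the very definition of the edge label intersection (line) graph, together with the two-full-components characterization of minimal separators given by Lemma~\ref{lm:two_full}.

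First I would invoke Lemma~\ref{lm:two_full} to get two distinct full connected components $C_1$ and $C_2$ of $\LG(\Prof) - F$. Next I would pick an arbitrary edge $e = \{u, v\} \in \Inc(u) \subseteq F$ and, using fullness of each $C_i$, obtain a vertex $e_i \in C_i$ that is adjacent to $e$ in $\LG(\Prof)$. By the definition of $\LG(\Prof)$, adjacency means that $e_i$ and $e$ share an endpoint, hence either $u \in e_i$ or $v \in e_i$. The first case is immediately ruled out because it would force $e_i \in \Inc(u) \subseteq F$, contradicting $e_i \in C_i \subseteq V(\LG(\Prof)) \setminus F$. Therefore $v \in e_i$ for $i = 1, 2$.

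This leaves $e_1$ and $e_2$ sharing the endpoint $v$ in the display graph, so $\{e_1, e_2\}$ is an edge of $\LG(\Prof)$. Since neither $e_1$ nor $e_2$ lies in $F$, this edge survives in $\LG(\Prof) - F$, placing $e_1$ and $e_2$ in the same connected component and contradicting $C_1 \neq C_2$. The only delicate point is making sure the initial step is not vacuous, i.e., that $\Inc(u)$ really is nonempty so that some $e = \{u,v\}$ exists; this follows from $u$ being a vertex of an input tree $T$, which has at least one edge incident to $u$. Everything else is a direct unfolding of the definition of the line graph together with Lemma~\ref{lm:two_full}, so I do not foresee any real obstacle.
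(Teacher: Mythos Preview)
Your proof is correct and takes a somewhat different route from the paper's. The paper argues directly from the definition of a minimal $a$--$b$ separator: by minimality, some $a$--$b$ path $\pi$ in $\LG(\Prof)$ meets $F$ only at $e=\{u,v\}$; the two neighbours $e_1,e_2$ of $e$ on $\pi$ cannot lie in $\Inc(u)\subseteq F$, so both must contain $v$, hence are adjacent, and $e$ can be shortcut out of $\pi$, contradicting that $F$ separates $a$ from $b$. You instead invoke Lemma~\ref{lm:two_full} and use the fullness of two components $C_1,C_2$ to locate the neighbours $e_1,e_2$ of $e$. Both arguments rest on the same line-graph observation that any neighbour of $e$ outside $\Inc(u)$ must contain $v$; your version trades the explicit path construction for the structural full-components criterion (already available in the paper), which makes the clique role of $\Inc(u)$ a bit more transparent, while the paper's version stays closer to the raw definition of minimality and avoids the extra appeal to Lemma~\ref{lm:two_full}.
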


%\paragraph{Proof of Lemma~\ref{lm:no_loner}.}
\begin{proof}
Suppose $F$ is a minimal $a$-$b$ separator of $\LG(\Prof)$ and $u$ is a vertex of some input tree such that $\Inc(u) \subseteq F$. Consider any vertex $e \in \Inc(u)$. Then, there exists a path $\pi$ from $a$ to $b$ in $\LG(\Prof)$ where $e$ is the only vertex of $F$ in $\pi$. If such a path $\pi$ did not exist, then $F-e$ would still be a $a$-$b$ separator, and $F$ would not be minimal, a contradiction. Let $e_1$ and $e_2$ be the neighbors of $e$ in $\pi$ and let $e = \{u, v\}$. Since $\Inc(u) \subseteq F$, $\pi$ does not contain any other vertex $e'$ where $u \in e'$.  Thus, $e \cap e_1 = \{v\}$ and $e \cap e_2 = \{v\}$. Let $\pi = a, \dots, e_1, e, e_2, \dots, b$. Then $\pi' = a, \dots, e_1, e_2, \dots, b$ is also a path from $a$ to $b$. But $\pi'$ does not contain any vertex of $F$, contradicting the assumption that $F$ is a separator of $\LG(\Prof)$.
%Since $F$ is a separator, there should exist a vertex in path $\pi'$ which is also in $F$. Thus, there exist two vertices of $F$ in path $\pi$, contradicting the assumption that $\pi$ contains exactly one vertex of $F$. 
Hence, neither such a minimal separator $F$ nor such a vertex $u$ exist. \qed
\end{proof}

\begin{lemma}\label{lm:two_components}
If $F$ is a minimal separator of $\LG(\Prof)$, then $\LG(\Prof)-F$ has exactly two connected components.
\end{lemma}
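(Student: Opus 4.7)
The plan is to assume for contradiction that $\LG(\Prof) - F$ has at least three connected components $C_a, C_b, C$, where $F$ is a minimal $a$-$b$ separator with $a \in C_a$ and $b \in C_b$, and to derive a contradiction with the standing assumption that $G(\Prof)$ is connected.

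First, I would translate the picture from $\LG(\Prof)$ to $G(\Prof)$ via Observation~\ref{obs:path}: connected components of $\LG(\Prof) - F$ correspond to connected components of $G(\Prof) - F$ that contain at least one edge. Let $D_a, D_b, D$ be the components of $G(\Prof) - F$ whose edges are exactly $V(C_a), V(C_b), V(C)$, respectively. The vertex sets $V(D_a), V(D_b), V(D)$ are nonempty and pairwise disjoint since they come from distinct components of $G(\Prof) - F$.

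Second, I would extract structural information from the minimality of $F$. For every $e \in F$, the set $F \setminus \{e\}$ is not an $a$-$b$ separator of $\LG(\Prof)$, so there is a path from $a$ to $b$ in $\LG(\Prof) - (F \setminus \{e\})$; this path must use $e$, and the two neighbors of $e$ along the path lie in $C_a$ and $C_b$ respectively. Writing $e = \{u, v\}$, the neighbor of $e$ in $C_a$ is an edge of $D_a$ sharing an endpoint with $e$, forcing one of $u, v$ into $V(D_a)$; symmetrically, one of $u, v$ lies in $V(D_b)$. Since $V(D_a) \cap V(D_b) = \emptyset$, the two endpoints of $e$ split: one in $V(D_a)$ and the other in $V(D_b)$. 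In particular, no edge of $F$ has an endpoint in $V(D)$.

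The contradiction then follows. Every edge of $G(\Prof)$ incident to a vertex of $V(D)$ avoids $F$ and therefore lies in $G(\Prof) - F$; since $D$ is a connected component of $G(\Prof) - F$, both endpoints of any such edge lie in $V(D)$. Hence $V(D)$ is a union of connected components of $G(\Prof)$. But $V(D)$ is nonempty (it contains the endpoints of at least one edge) and properly contained in $V(G(\Prof))$ (it is disjoint from the nonempty $V(D_a)$), contradicting the connectedness of $G(\Prof)$. The main delicate point is the correspondence between components of $\LG(\Prof) - F$ and edge-containing components of $G(\Prof) - F$, together with the observation that because $F$'s edges are pinned between $V(D_a)$ and $V(D_b)$ by minimality, a hypothetical third component $D$ would be completely isolated in $G(\Prof)$.
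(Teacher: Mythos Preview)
Your argument is correct, but it follows a different route from the paper's. The paper works entirely inside $\LG(\Prof)$: it invokes Lemma~\ref{lm:two_full} to obtain two full components $H_1,H_2$, picks a third component $H_3$, finds a single $e=\{u,v\}\in F$ that has a neighbor $e_3\in H_3$ (using connectedness of $\LG(\Prof)$), and then observes that since any two neighbors of $e$ sharing the same endpoint of $e$ are themselves adjacent, $e$ can have neighbors in at most two components---contradicting fullness of both $H_1$ and $H_2$. You instead translate to $G(\Prof)$ via Observation~\ref{obs:path}, use the minimal $a$-$b$ separator property directly (rather than Lemma~\ref{lm:two_full}) to show that \emph{every} $e\in F$ has one endpoint in $V(D_a)$ and the other in $V(D_b)$, and then conclude that a third component $D$ would be cut off from the rest of $G(\Prof)$, contradicting the standing connectedness assumption.

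Both arguments exploit the same underlying fact---an edge $e=\{u,v\}$ has only two endpoints, so it can touch at most two components of the complement---but package it differently. The paper's version is shorter and stays in one graph; yours avoids the full-component lemma at the cost of setting up the $\LG(\Prof)$--$G(\Prof)$ correspondence carefully. One small point worth making explicit in your write-up: when you take the $a$--$b$ path in $\LG(\Prof)-(F\setminus\{e\})$, take it to be simple so that $e$ occurs exactly once and the ``two neighbors of $e$ along the path'' are well defined.
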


%\paragraph{Proof of Lemma~\ref{lm:two_components}.}
\begin{proof}
Assume that $\LG(\Prof)-F$ has more than two connected components. By Lemma~\ref{lm:two_full}, $\LG(\Prof)-F$ has at least two full components. Let $H_1$ and $H_2$ be two full components of $\LG(\Prof)-F$. Let $H_3$ be a connected component of $\LG(\Prof)-F$ different from $H_1$ and $H_2$. By assumption $\LG(\Prof)$ is connected. Thus, there exists an edge $\{e, e_3\}$ in $\LG(\Prof)$ where $e \in F$ and $e_3 \in H_3$. Since $H_1$ and $H_2$ are full components, there exist edges $\{e, e_1\}$ and $\{e, e_2\}$ in $\LG(\Prof)$ where $e_1 \in V(H_1)$ and $e_2 \in V(H_2)$. 
%Suppose that it has more than two.  Let $H_1$, $H_2$, and $H_3$ be three distinct full components of $\LG(\Prof)-F$.   Then, for every  $e \in F$, there exist edges $\{e, e_1\}$, $\{e, e_2\}$, and $\{e, e_3\}$ in $\LG(\Prof)$ where $e_1 \in V(H_1)$, $e_2 \in V(H_2)$, and $e_3 \in V(H_3)$. 

Let $e = \{u, v\}$, and assume without loss of generality that $u \in e \cap e_3$. Then, there is no vertex $f \in V(H_1)$ where $u \in e \cap f$. Thus, $v \in e \cap e_1$. Similarly, 
%$H_1$, $H_2$, $H_3$ are different connected components of $\LG(\Prof)-F$, and hence, 
there is no vertex $f \in V(H_2)$ such that $u \in f \cap e$ or $v \in f \cap e$. But then $H_2$ does not contain a vertex adjacent to $e$, so $H_2$ is not a full component, a contradiction.
\qed
\end{proof}

\begin{corollary}\label{cor:sep_connected}
If $F$ is a minimal separator of $\LG(\Prof)$, then $\LG(\Prof)-F'$ is connected for any $F' \subset F$.
\end{corollary}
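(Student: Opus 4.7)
The plan is to derive the corollary directly from the two preceding lemmas together with Lemma~\ref{lm:two_full}, which characterizes minimal separators via full components. The key observation is that every vertex of $F$ has a neighbor in each of the two components of $\LG(\Prof)-F$, so reinserting even a single vertex of $F$ already bridges the two components.

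First I would invoke Lemma~\ref{lm:two_components} to conclude that $\LG(\Prof) - F$ has exactly two connected components, call them $H_1$ and $H_2$. Next, I would apply Lemma~\ref{lm:two_full}: since $F$ is a minimal separator, $\LG(\Prof) - F$ must contain at least two full components; because there are only two components in total, both $H_1$ and $H_2$ are full. By definition of fullness, every vertex $e \in F$ has at least one neighbor in $V(H_1)$ and at least one neighbor in $V(H_2)$ in $\LG(\Prof)$.

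Now let $F' \subset F$ be a proper subset, so $F \setminus F' \neq \emptyset$, and pick any $e^\star \in F \setminus F'$. In $\LG(\Prof) - F'$ the vertex $e^\star$ is present, together with all of $V(H_1) \cup V(H_2)$ and all remaining vertices of $F \setminus F'$. Using the edges from $e^\star$ to its neighbors in $H_1$ and in $H_2$ (which are not removed because only vertices of $F'$ are deleted), we see that $V(H_1) \cup \{e^\star\} \cup V(H_2)$ lies in a single connected component of $\LG(\Prof) - F'$. The same fullness argument applied to every other $e \in F \setminus F'$ shows that each such $e$ is also adjacent to some vertex of $H_1$ in $\LG(\Prof) - F'$ and is therefore in this same component. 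Hence all vertices of $\LG(\Prof) - F'$ lie in one connected component, and the graph is connected.

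There is no real obstacle here; the only subtlety is making sure that the fullness conclusion applies to both components (which requires the "exactly two" count from Lemma~\ref{lm:two_components} combined with the "at least two full" statement from Lemma~\ref{lm:two_full}), and that the neighbors used to bridge $H_1$ and $H_2$ are not themselves removed — which is immediate since those neighbors lie in $V(H_1) \cup V(H_2)$, disjoint from $F \supseteq F'$.
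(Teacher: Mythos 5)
Your proof is correct and is exactly the argument the paper intends: the corollary is stated without proof, but it is placed so as to follow from Lemma~\ref{lm:two_components} together with Lemma~\ref{lm:two_full}, which is precisely how you derive it (both components are full, so restoring any single vertex of $F \setminus F'$ reconnects everything). The one subtlety you flag --- that fullness of \emph{both} components needs the ``exactly two'' count --- is handled correctly.
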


\paragraph{Proof of Lemma~\ref{lm:cuts_seps}.}
%\begin{proof}
We prove that if $F$ is a legal minimal separator of $\LG(\Prof)$ then $F$ is a nice minimal cut of $G(\Prof)$. The proof for the other direction is similar and is omitted. 

First, we show that $F$ is a cut of $G(\Prof)$. Assume the contrary. Let $\{u, v\}$ and $\{p, q\}$ be vertices in different components of $\LG(\Prof)-F$. Since $G(\Prof)-F$ is connected, there exists a path between vertices $u$ and $q$. Also, $\{u,v\} \notin F$ and $\{p,q\} \notin F$. Thus, by Observation~\ref{obs:path} there also exists a path between vertices $\{u, v\}$ and $\{p, q\}$ of $\LG(\Prof)-F$. This implies that $\{u,v\}$, $\{p,q\}$ are in the same connected component of $\LG(\Prof)-F$, a contradiction. Thus $F$ is a cut.
 
Next we show that $F$ is a nice cut of $G(\Prof)$. For every $T \in \Prof$ all the vertices of $\LG(T)$ in $F$ form a clique in $\LG(T)$. Thus, all the edges of $T$ in $F$ are incident on a common vertex, so $F$ is a legal cut. To complete the proof, assume that $G(\Prof)-F$ has a connected component with no edge and let $u$ be the vertex in one such component. Then, $\Inc(u) \subseteq F$. But $F$ is a minimal separator of $\LG(\Prof)$, and by Lemma~\ref{lm:no_loner}, $\Inc(u) \not \subseteq F$, a contradiction. Thus, $F$ is a nice cut.

Lastly, we show that $F$ is a minimal cut of $G(\Prof)$. Assume, on the contrary, that there exists $F' \subset F$ where $G(\Prof)-F'$ is disconnected. Since $F' \subset F$ and every connected component of $G(\Prof)-F$ has at least one edge, every connected component  of $G(\Prof)-F'$ also has at least one edge. Let $\{u,v\}$ and $\{p, q\}$ be the edges in different components of $G(\Prof)-F'$. By Corollary~\ref{cor:sep_connected}, $\LG(\Prof)-F'$ is connected and thus, there is a path between $\{u,v\}$ and $\{p, q\}$ in $\LG(\Prof)-F'$.  By Observation~\ref{obs:path} there must also be a path between vertices $u$ and $p$ in $G(\Prof)-F'$. Hence, edges $\{u, v\}$ and $\{p, q\}$ are in the same connected component of $G-F'$, a contradiction. Thus, $F$ is a minimal cut.
\qed
%\end{proof}

\begin{lemma}\label{lm:parallel}
Two legal minimal separators $F$ and $F'$ of $\LG(\Prof)$ are parallel if and only if the nice minimal cuts $F$ and $F'$ are parallel in $G(\Prof)$.
\end{lemma}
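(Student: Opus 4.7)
The plan is to establish a bijection between the connected components of $G(\Prof)-F$ and those of $\LG(\Prof)-F$, and then observe that under this bijection the two notions of ``parallel'' read off the same subsets of $F'$. Lemma~\ref{lm:cuts_seps} guarantees that whenever $F$ is a legal minimal separator of $\LG(\Prof)$ it is also a nice minimal cut of $G(\Prof)$ (and vice versa), so both sides of the claimed equivalence are well-posed simultaneously.

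First I would fix such an $F$ and let $H_1,H_2$ be the connected components of $G(\Prof)-F$; by niceness each $H_i$ contains at least one edge, and by Lemma~\ref{lm:two_components} there are exactly two components on the line-graph side as well, call them $H_1^\ast, H_2^\ast$. The next step is to show $V(H_i^\ast)=E(H_i)$ (up to relabeling). The inclusion $E(H_i)\subseteq V(\LG(\Prof)-F)$ is immediate since $E(H_i)\cap F=\emptyset$; Observation~\ref{obs:path} applied to paths inside $H_i$ shows that $E(H_i)$ induces a connected subgraph of $\LG(\Prof)-F$. Conversely, every vertex of $\LG(\Prof)-F$ is an edge $\{u,v\}\notin F$ of $G(\Prof)$ whose endpoints lie in the same component $H_i$, so $E(H_1)\cup E(H_2)$ covers $V(\LG(\Prof)-F)$. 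Since the two sets are disjoint, connected, and exhaust the vertex set, they must be exactly $V(H_1^\ast)$ and $V(H_2^\ast)$.

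With this correspondence in hand, the parallelism conditions become literal restatements of each other. The cuts $F$ and $F'$ are parallel in $G(\Prof)$ iff at most one $H_i$ satisfies $E(H_i)\cap F'\neq\emptyset$; the separators $F$ and $F'$ are parallel in $\LG(\Prof)$ iff at most one $H_i^\ast$ satisfies $V(H_i^\ast)\cap F'\neq\emptyset$. Because $F'\subseteq E(G(\Prof))=V(\LG(\Prof))$ and $V(H_i^\ast)=E(H_i)$, the two conditions are identical, proving the equivalence in both directions.

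The only subtlety, and the one I would be most careful about, is ensuring that the bijection $H_i\leftrightarrow H_i^\ast$ really holds at the level of components rather than merely of connected subgraphs. The niceness of $F$ is essential here: without it, a component of $G(\Prof)-F$ could consist of an isolated vertex $u$ with $\Inc(u)\subseteq F$, contributing nothing to $\LG(\Prof)-F$ and breaking the count of components. Niceness (guaranteed by Lemma~\ref{lm:cuts_seps} via Lemma~\ref{lm:no_loner}) rules this out and lets the two-component counts on both sides match exactly.
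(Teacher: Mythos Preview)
Your argument is correct and follows essentially the same route as the paper: both rely on Observation~\ref{obs:path} to transfer connectivity between $G(\Prof)-F$ and $\LG(\Prof)-F$, with the paper unwinding this as a quick contradiction while you make the component bijection $V(H_i^\ast)=E(H_i)$ explicit first. The extra structure you set up is sound (and your remark about niceness via Lemma~\ref{lm:no_loner} is exactly the point that makes the component counts match), but it is the same idea packaged slightly differently.
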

\begin{proof}
Assume that legal minimal separators $F$ and $F'$ of $\LG(\Prof)$ are parallel, but nice minimal cuts $F$ and $F'$ of $G(\Prof)$ are not. Then, there exists $\{\{u, v\}, \{p, q\}\} \subseteq F'$ where $\{u, v\}$ and $\{p, q\}$ are in different components of $G(\Prof)-F$. Since $F$ and $F'$ are parallel separators in $\LG(\Prof)$, and $F$ does not contain $\{u, v\}$ and $\{p, q\}$, there exists a path between vertices $\{u, v\}$ and $\{p, q\}$ in $\LG(\Prof)-F$. Then, by Observation~\ref{obs:path} there also exists a path between vertices $u$ and $q$ in $G(\Prof)-F$. Thus, edges $\{u, v\}$ and $\{p, q\}$ are in the same connected component of $G(\Prof)-F$, a contradiction.

The other direction can be proved similarly, using Observation~\ref{obs:path}. \qed
%Similarly, assume that legal minimal cuts $F$ and $F'$ of $G(\Prof)$ are pairwise parallel, but legal minimal separators $F$ and $F'$ of $\LG(\Prof)$ are not pairwise parallel. Then, there exists $\{\{u, v\}, \{x, y\}\} \subseteq F'$ where $\{u, v\}$ and $\{x, y\}$ are in different components of $\LG(\Prof)-F$. Since legal minimal cuts $F$ and $F'$ of $G(\Prof)$ are pairwise parallel, there exists a path between vertices $u$ and $x$ in $G(\Prof)-F$. Then, by Observation~\ref{obs:path} there also exists a path between vertices $\{u, v\}$ and $\{x, y\}$ in $\LG(\Prof)-F$. Hence $\{u, v\}$ and $\{x, y\}$ are in the same connected component of $\LG(\Prof)-F$ which is a contradiction.
\end{proof}

The next lemma, from~\cite{Gysel2012}, follows from the definition of restricted triangulation.

\begin{lemma}\label{lm:two_cliques}
Let $H$ be a restricted triangulation of $\LG(\Prof)$ and let $(T, B)$ be a clique tree of $H$. Let $e = \{u, v\}$ be any vertex in $\LG(\Prof)$.  Then, there does not exist a node $x \in V(T)$ where $B(x)$ contains vertices from both $\Inc(u)\setminus e$ and $\Inc(v)\setminus e$.
\end{lemma}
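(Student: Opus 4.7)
The plan is a proof by contradiction. Suppose that some node $x \in V(T)$ has $f \in B(x) \cap (\Inc(u) \setminus \{e\})$ and $g \in B(x) \cap (\Inc(v) \setminus \{e\})$. Write $f = \{u, w\}$ with $w \neq v$ and $g = \{v, w'\}$ with $w' \neq u$. Since $B(x)$ is a clique of $H$, the pair $\{f, g\}$ is an edge of $H$. Moreover, because $e$ shares endpoint $u$ with $f$ and endpoint $v$ with $g$ in $G(\Prof)$, the pairs $\{e, f\}$ and $\{e, g\}$ are edges of $\LG(\Prof)$, and hence of $H$. Thus $e$, $f$, $g$ are pairwise adjacent in $H$.

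Next, I will place $e$, $f$, $g$ inside a single maximal clique. Using the coherence of $(T, B)$ together with Helly's property for subtrees of a tree, applied to the three subtrees of $T$ associated with $e$, $f$, $g$ (which pairwise intersect by the adjacencies above), I obtain a node $y \in V(T)$ with $\{e, f, g\} \subseteq B(y)$. The rest of the argument takes place inside this clique.

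I then case-split on the nature of the edge $\{f, g\}$ in $H$. In the fill-in sub-case, restrictedness forces $f$ and $g$ to come from distinct input trees, since otherwise $\{f, g\}$ would be an invalid fill-in edge. In the original sub-case, $f$ and $g$ share an endpoint in $G(\Prof)$; the constraints $w \neq v$ and $w' \neq u$ force $w = w'$, so $f = \{u, w\}$ and $g = \{v, w\}$ with $w \notin \{u, v\}$, and $u$--$v$--$w$ is a triangle in $G(\Prof)$.

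The final step is to close both cases by extracting a contradiction. In the fill-in sub-case, I expect the remaining vertices of $B(y)$, together with the original edges $\{e, f\}$ and $\{e, g\}$, to force an invalid fill-in edge inside $B(y)$ between two edges of a single $\LG(T')$. The hardest part will be the original-edge sub-case: since no input tree can contain the whole triangle $u$--$v$--$w$ (acyclicity), the edges $e$, $f$, $g$ are distributed among at least two input trees, and combined with the disjointness of internal vertices across input trees, this pins down which of $u$, $v$, $w$ must be leaves shared between trees. I will then trace through the forced tree structure to locate a fill-in edge inside $B(y)$ whose two endpoints both lie in some $\LG(T')$, contradicting restrictedness.
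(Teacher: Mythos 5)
The paper never actually proves this lemma---it cites Gysel et al.\ and remarks only that it ``follows from the definition of restricted triangulation''---so the question is whether your argument stands on its own. Your setup is fine: the three pairwise adjacencies, the Helly step placing $e,f,g$ in a common node $B(y)$, and the case split on $\{f,g\}$. The sub-case where both $u$ and $v$ are internal vertices of the input tree $T_0$ containing $e$ closes exactly as you intend: there $f$ and $g$ are disjoint edges of $T_0$, so $\{f,g\}$ is an invalid fill-in edge, and this is presumably the one-line argument the paper has in mind.

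The gap is your final step, which is a statement of intent rather than a proof, and in the decisive sub-case the intended contradiction does not exist. Suppose $v$ is a leaf label shared with a second tree $T_1$, $u$ is internal in $T_0$, $f=\{u,w\}\in E(T_0)$ and $g=\{v,w'\}\in E(T_1)$. Then $\{f,g\}$ is a \emph{valid} fill-in edge, and nothing forces $B(y)$ to contain anything beyond $\{e,f,g\}$, so there is no invalid fill-in edge to be found. Restrictedness alone genuinely does not suffice: let $T_0$ be the star on leaves $a,b,v$ with centre $u$ and $T_1$ the star on leaves $v,c,d$ with centre $w'$; then $\LG(\Prof)$ is already chordal, and adding the single valid fill-in edge between $\{u,a\}$ and $\{w',v\}$ yields a restricted triangulation whose maximal clique $\bigl\{\{u,a\},\{u,v\},\{w',v\}\bigr\}$ contains a vertex of $\Inc(u)\setminus\{e\}$ and a vertex of $\Inc(v)\setminus\{e\}$ for $e=\{u,v\}$. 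So the statement needs the triangulation to be minimal (which is how the paper applies it, to $\LG(\Prof)_{\F}$), and a correct proof must exploit that minimality---for instance, that every fill-in edge of a minimal triangulation lies inside a minimal separator, which here is a legal minimal separator and hence a nice minimal cut of $G(\Prof)$---rather than only the validity of individual fill-in edges inside one clique. As a side remark, the ``original edge'' sub-case you flag as hardest is essentially vacuous: a triangle on $u,v,w$ in $G(\Prof)$ using $e$ and edges at both of its endpoints forces some input tree to be a single edge between two labels, which cannot happen when every input tree has at least three leaves.
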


\begin{lemma}\label{lm:two_subtrees}
Let $T$ be a tree in $\Prof$ and suppose $F$ is a minimal cut of $G(\Prof)$ that contains precisely one edge $e$ of $T$. Then, the edges of the two subtrees of $T-e$ are in different connected components of $G(\Prof)-F$.
\end{lemma}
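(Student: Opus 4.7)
The plan is to exploit the minimality of the cut $F$ to pin down the endpoints of $e$ to the two different components of $G(\Prof)-F$, and then use the fact that the two subtrees of $T-e$ stay intact in $G(\Prof)-F$ to conclude that each lies in its respective component.

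First I would recall that since $F$ is a minimal cut of a connected graph, $G(\Prof)-F$ has exactly two connected components, call them $C_1$ and $C_2$ (this is noted in the preliminaries on cuts). Write $e=\{u,v\}$, and let $T_1,T_2$ be the two subtrees of $T-e$, with $u\in V(T_1)$ and $v\in V(T_2)$.

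The key step is to show $u\in V(C_1)$ and $v\in V(C_2)$ (up to relabeling). By minimality of $F$, the graph $G(\Prof)-(F\setminus\{e\})$ is connected. Hence adding the single edge $e$ back to the disconnected graph $G(\Prof)-F$ reconnects $C_1$ and $C_2$. Since $e$ is the only edge added, its endpoints must lie in different components, so we may assume $u\in V(C_1)$ and $v\in V(C_2)$.

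Finally, I would observe that because $F\cap E(T)=\{e\}$, every edge of $T_1$ and every edge of $T_2$ is present in $G(\Prof)-F$. Each $T_i$ is connected (it is a subtree), so all of its vertices lie in the same component of $G(\Prof)-F$ as the vertex $u$ or $v$ it contains. Therefore all edges of $T_1$ lie in $C_1$ and all edges of $T_2$ lie in $C_2$, proving the claim. I do not foresee a real obstacle here; the only subtle point is the justification of Step 2, which rests squarely on the minimality of $F$.
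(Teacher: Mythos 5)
Your proof is correct and follows essentially the same route as the paper's: establish that the endpoints of $e$ fall in different components of $G(\Prof)-F$ (the paper asserts this directly from minimality, while you spell out the reconnection argument via $G(\Prof)-(F\setminus\{e\})$ being connected), then observe that each subtree of $T-e$ survives intact and therefore sits in the component of its respective endpoint. No gaps.
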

\begin{proof}
Since $F$ is a minimal cut of $G(\Prof)$, the endpoints of $e$ are in different connected components of $G(\Prof)-F$. 
%Otherwise, $F \setminus \{e\}$ is a cut a $G(\Prof)$ and $F$ is not minimal, a contradiction. 
Let $e=\{u,v\}$. For every $x \in e$, let $T_x$ represent the subtree containing vertex $x$ in $T-e$. Edge $e$ is the only edge of $T$ in $F$. Thus, for every $x \in e$ all the edges of $T_x$ are in the same connected component of $G(\Prof)-F$ as vertex $x$. Since the endpoints of $e$ are in different connected components of $G(\Prof)-F$, the edges of $T_u$ and $T_v$ are also in different connected components of $G(\Prof)-F$.
\qed
\end{proof}

\section{Characterizing Compatibility via Cuts}

\label{sec:comp_cuts}

A set $\F$ of cuts of $G(\Prof)$ is \emph{complete} if, for every input tree $T \in \Prof$ and every internal edge $e$ of $T$, there exists a cut $F \in \F$ where $e$ is the only edge of $T$ in $F$.

\begin{lemma}\label{lm:complete}
$G(\Prof)$ has a complete set of pairwise parallel nice minimal cuts if and only if it has a complete set of pairwise parallel legal minimal cuts.
\end{lemma}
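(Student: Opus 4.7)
The plan is to observe that one direction is immediate and to reduce the other to a structural analysis of non-nice legal minimal cuts.

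For the direction ``nice $\Rightarrow$ legal'', by definition every nice minimal cut is a legal minimal cut, so the very same set $\F$ witnesses both sides. Nothing further is needed.

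For the direction ``legal $\Rightarrow$ nice'', I start with a complete set $\F$ of pairwise parallel legal minimal cuts and let $\F' = \{F \in \F : F \text{ is nice}\}$. The goal is to show that $\F'$ is still complete; pairwise parallelism and legality are inherited trivially from $\F$. To do this, I first pin down the shape of any legal minimal cut $F \in \F$ that fails to be nice: by definition some component of $G(\Prof)-F$ contains no edge, hence is an isolated vertex $u$, so $\Inc(u) \subseteq F$. Since $\Inc(u)$ is itself a cut of $G(\Prof)$ and $F$ is a minimal cut, this forces $F = \Inc(u)$.

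Next I would argue that such a cut $F = \Inc(u)$ cannot be the completeness witness for any internal edge of any input tree. There are two cases. If $u$ is a leaf of $G(\Prof)$, then for every $T \in \Prof$, $F \cap E(T)$ is either empty (if the label $u$ is not in $T$) or the single leaf edge of $T$ incident on $u$; in the latter case this edge is not internal in $T$. If instead $u$ is internal to some input tree $T_0$, then by the disjointness assumption on internal vertices, $u \in V(T)$ only for $T = T_0$; hence $F \cap E(T) = \emptyset$ for $T \neq T_0$, while $F \cap E(T_0)$ is the set of all edges of $T_0$ incident on $u$, which has size $\deg_{T_0}(u) \geq 2$ and so is never a singleton. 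Either way, there is no tree $T$ and internal edge $e$ of $T$ with $F \cap E(T) = \{e\}$. Thus discarding the non-nice cuts from $\F$ removes no completeness witness, and $\F'$ remains complete.

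The main (and only mildly delicate) obstacle is step three: correctly identifying non-nice legal minimal cuts as exactly the cuts of the form $\Inc(u)$ and then checking case by case that such cuts are useless for completeness. Once this is done, the lemma follows by combining it with the trivial direction.
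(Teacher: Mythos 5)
Your proof is correct, and it reaches the conclusion by a genuinely different route in the key step. The paper also prunes $\F$, but it does so by passing to a \emph{minimal} complete subset $\F'$ and arguing directly that every member of $\F'$ is nice: minimality of $\F'$ forces each $F \in \F'$ to be the unique witness for some internal edge $e$ of some tree $T$ (i.e.\ $F \cap E(T) = \{e\}$), and then Lemma~\ref{lm:two_subtrees} places the two subtrees of $T-e$ in different components of $G(\Prof)-F$, so both components carry an edge. You instead prove the contrapositive of the same core fact via a structural classification: a legal minimal cut that is not nice isolates a vertex $u$, hence contains $\Inc(u)$, hence \emph{equals} $\Inc(u)$ by minimality of the cut; and a cut of the form $\Inc(u)$ can never be a completeness witness for an internal edge (either $u$ is a leaf and the witnessed edge is non-internal, or $u$ is internal to a unique tree $T_0$ and $|F \cap E(T_0)| = \deg_{T_0}(u) \geq 2$). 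Your argument is self-contained --- it does not invoke Lemma~\ref{lm:two_subtrees} --- and the characterization of non-nice legal minimal cuts as exactly the stars $\Inc(u)$ is a nice piece of extra structural information; the paper's version is shorter because it reuses machinery already established. All the steps you flag as delicate do go through: the edgeless component of a minimal cut is a single vertex, $\Inc(u)$ is itself a cut because $G(\Prof)$ is connected with more than one vertex, and an internal vertex of a tree has degree at least two.
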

\begin{proof}
The ``only if part'' follows from the definition of a nice cut. Let $\F$ be a complete set of pairwise parallel legal minimal cuts. Consider any minimal subset $\F'$ of $\F$ that is also complete. Let $F$ be a legal minimal cut of $\F'$. Since $\F'$ is minimal, there exists an edge $e \in F$ of some input tree $T$ such that $e$ is the only edge of $T$ in $F$. Also, since $e$ is an internal edge, both the subtrees of $T-e$ have at least one edge each. Thus by Lemma~\ref{lm:two_subtrees}, both the connected components of $G(\Prof)-F$ have at least one edge each. Hence, $F$ is a nice minimal cut of $G(\Prof)$. It thus follows that $\F'$ is a complete set of pairwise parallel nice minimal cuts of $G(\Prof)$.
\qed
\end{proof}

We now characterize the compatibility of a profile in terms of minimal cuts in the display graph of the profile.

\begin{theorem}\label{thm:cuts_equal_compatibility}
A profile $\Prof$ of unrooted phylogenetic trees is compatible if and only if there exists a complete set of pairwise parallel legal minimal cuts for $G(\Prof)$.
\end{theorem}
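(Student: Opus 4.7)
The plan is to handle the two directions through different channels: the forward direction via Theorem~\ref{thm:edge_label} and the separator/cut correspondence, and the reverse direction via the Splits-Equivalence Theorem. For the forward direction I would invoke Theorem~\ref{thm:edge_label} to obtain a maximal set $\F$ of pairwise parallel legal minimal separators of $\LG(\Prof)$, and then apply Lemmas~\ref{lm:cuts_seps} and~\ref{lm:parallel} to reinterpret $\F$ as a collection of pairwise parallel nice (and hence legal) minimal cuts of $G(\Prof)$. What remains is to verify completeness: for every internal edge $e$ of every input tree $T$, some $F \in \F$ has $e$ as its unique edge of $T$.

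The completeness argument is the technical heart of the proof, and I expect it to be the main obstacle. I would work inside a clique tree $(T', B)$ of the minimal restricted triangulation $H = \LG(\Prof)_\F$ supplied by Theorem~\ref{thm:parallel_minimal_ct}, so that every separator in $\F$ appears as $B(x) \cap B(y)$ for some adjacent nodes $x, y$ of $T'$. Fix an internal edge $e = \{u, v\}$ of $T$ and pick $e_u \in \Inc(u) \setminus \{e\}$ and $e_v \in \Inc(v) \setminus \{e\}$; both exist because $u$ and $v$ are internal to $T$, and (since the internal vertices of $T$ appear only in $T$) both $e_u$ and $e_v$ are edges of $T$. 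By coherence, the subtree $T'_e$ of bags containing $e$ meets both $T'_{e_u}$ and $T'_{e_v}$, while Lemma~\ref{lm:two_cliques} prevents any single bag from containing an element of $\Inc(u) \setminus \{e\}$ together with an element of $\Inc(v) \setminus \{e\}$. I would then classify each bag in $T'_e$ as $u$-type, $v$-type, or neither, walk along a path in $T'_e$ from a $u$-type bag to a $v$-type bag, and take the first edge that crosses the type boundary. This produces adjacent $x, y$ with $e \in B(x) \cap B(y)$; the legality of the resulting separator, together with the type mismatch of $x$ and $y$, forces $B(x) \cap B(y) \cap E(T) = \{e\}$, which is exactly what completeness requires.

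For the reverse direction I would avoid lifting the given cuts to a maximal parallel set of separators in $\LG(\Prof)$, since such an extension may introduce non-legal separators, and instead argue through splits. Each minimal cut $F$ partitions $G(\Prof)$ into exactly two components and induces a bipartition $\sigma(F)$ of $\Leaves(\Prof)$. Parallelism of two cuts $F_1, F_2$ transfers to compatibility of $\sigma(F_1)$ and $\sigma(F_2)$: one component of $G(\Prof) - F_1$ carries no edge of $F_2$, so it survives intact in $G(\Prof) - (F_1 \cup F_2)$ and lies entirely inside a single component of $G(\Prof) - F_2$, making one of the four pairwise intersections of the two bipartitions empty. Completeness together with Lemma~\ref{lm:two_subtrees} forces the witnessing cut for each internal edge $e$ of each $T \in \Prof$ to restrict to $\sigma_e(T)$ on $\Leaves(T)$, which is nontrivial. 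The nontrivial bipartitions obtained from $\F$ therefore form a pairwise compatible family of splits of $\Leaves(\Prof)$ whose restriction to each $\Leaves(T)$ contains $\Sigma(T)$, and Theorem~\ref{thm:SET} then yields a supertree $S$ which displays every input tree, showing $\Prof$ is compatible.
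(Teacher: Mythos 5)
Your forward direction is essentially the paper's own argument. The paper reduces the theorem to Lemma~\ref{lm:maximal_seps_maximal_cuts}, whose proof of (i)$\Rightarrow$(ii) likewise works in a clique tree of $\LG(\Prof)_{\F}$, walks along the path between a bag containing $\Inc(u)$ and a bag containing $\Inc(v)$, stops at the first bag disjoint from $\Inc(u)\setminus\{e\}$, and extracts the desired separator from the resulting adjacent pair using Theorem~\ref{thm:parallel_minimal_ct}(ii), Lemma~\ref{lm:two_cliques}, and legality, exactly as you describe (your ``type boundary'' step should be read as ``first bag on the path that is not $u$-type,'' which is well defined and is what the paper does). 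Your reverse direction, however, genuinely departs from the paper. The paper stays inside the separator framework: it lifts the given cuts to legal minimal separators, extends them to a \emph{maximal} pairwise parallel family, and uses completeness together with Lemma~\ref{lm:two_subtrees} to show that no non-legal separator can enter the extension, closing the loop with Theorem~\ref{thm:edge_label}. You instead build the supertree directly from the splits $\sigma(F)$ via Theorem~\ref{thm:SET}; this is precisely the content of Section~\ref{sec:splits_cuts} (Lemma~\ref{lm:parallel_implies_compatible} and Theorem~\ref{lm:splits_compatibility}), so your route is sound and more constructive, at the cost of not establishing the maximal-parallel-family equivalence that the paper records as Lemma~\ref{lm:maximal_seps_maximal_cuts}.

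One step in your reverse direction needs repair: a legal minimal cut need not induce a bipartition of $\Leaves(\Prof)$. If $F = \Inc(u)$ for an internal vertex $u$, then $F$ is a legal minimal cut whose removal isolates $u$, so one side of ``$\sigma(F)$'' carries no leaves; Lemmas~\ref{lm:cut_gives_split} and~\ref{lm:parallel_implies_compatible} are stated for \emph{nice} cuts for exactly this reason. The fix is the paper's Lemma~\ref{lm:complete}: pass first to a minimal complete subfamily $\F'\subseteq\F$; each $F\in\F'$ then contains some internal edge $e$ of some $T$ as the unique edge of $T$ in $F$, so by Lemma~\ref{lm:two_subtrees} both components of $G(\Prof)-F$ contain an edge and hence a leaf, making every cut in $\F'$ nice. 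With that sentence inserted, your argument goes through.
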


\begin{example}\label{ex:compatible_cuts}
For the display graph of Fig.~\ref{fig:example}, let $\F = \{F_1,F_2,F_3,F_4\}$, where $F_1 = \{\{1,2\}, \{5,6\}\}$, $F_2 = \{\{2,3\}, \{6,7 \},\{5,6\}\}$, $F_3=\{\{4,5\}, \{1,2\},\{1,c\}\}$ and $F_4=\{\{6,7\},\{2,f\}\}$. Then, $\F$ is a complete set of pairwise parallel nice minimal cuts.
\end{example}

Theorem~\ref{thm:cuts_equal_compatibility}  and Lemmas~\ref{lm:cuts_seps},~\ref{lm:parallel}, and~\ref{lm:complete} imply an analogous result for $\LG(\Prof)$. A set $\F$ of legal minimal separators of $\LG(\Prof)$ is \emph{complete}, if for every internal edge $e$ of an input tree $T$, there exists a separator $F \in \F$ where $e$ is the only vertex of $\LG(T)$ in $F$. 

\begin{theorem}\label{thm:seps_equal_compatibility}
A profile $\Prof$ of unrooted phylogenetic trees is compatible if and only if there exists a complete set of pairwise parallel legal minimal separators for $\LG(\Prof)$.
\end{theorem}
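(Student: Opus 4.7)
\medskip
\noindent\textbf{Proof proposal.} The plan is to obtain Theorem~\ref{thm:seps_equal_compatibility} by chaining together the characterization of compatibility in terms of cuts of $G(\Prof)$ (Theorem~\ref{thm:cuts_equal_compatibility}) with the translation machinery between cuts of $G(\Prof)$ and separators of $\LG(\Prof)$ provided by Lemmas~\ref{lm:cuts_seps},~\ref{lm:parallel}, and~\ref{lm:complete}. Since $V(\LG(\Prof))=E(G(\Prof))$ and, for each input tree $T$, the vertices of $\LG(T)$ are precisely the edges of $T$, every subset $F \subseteq E(G(\Prof))$ can be read simultaneously as a candidate cut of $G(\Prof)$ and a candidate separator of $\LG(\Prof)$; the lemmas above tell us that all the relevant structural properties are preserved under this identification.

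Concretely, I would argue as follows. Start from $\Prof$ compatible and apply Theorem~\ref{thm:cuts_equal_compatibility} to obtain a complete set $\F$ of pairwise parallel legal minimal cuts of $G(\Prof)$. Use Lemma~\ref{lm:complete} to replace $\F$ (or pass to a subset of it) with a complete set $\F'$ of pairwise parallel \emph{nice} minimal cuts of $G(\Prof)$. Then apply Lemma~\ref{lm:cuts_seps} elementwise: each $F \in \F'$, being a nice minimal cut of $G(\Prof)$, is a legal minimal separator of $\LG(\Prof)$. Lemma~\ref{lm:parallel} guarantees that the pairwise parallelism of the elements of $\F'$ carries over from cuts in $G(\Prof)$ to separators in $\LG(\Prof)$. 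Finally, completeness transfers verbatim: for every internal edge $e$ of a tree $T \in \Prof$, the cut $F \in \F'$ in which $e$ is the unique edge of $T$ is also the separator in which $e$ is the unique vertex of $\LG(T)$, since the two completeness definitions literally coincide under the identification $E(T) = V(\LG(T))$.

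For the converse, I would reverse the argument symmetrically: given a complete set of pairwise parallel legal minimal separators of $\LG(\Prof)$, Lemmas~\ref{lm:cuts_seps} and~\ref{lm:parallel} turn it into a complete set of pairwise parallel nice minimal cuts of $G(\Prof)$, which in particular is a complete set of pairwise parallel legal minimal cuts of $G(\Prof)$ (since every nice cut is legal by definition). Theorem~\ref{thm:cuts_equal_compatibility} then yields compatibility of $\Prof$.

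The only real subtlety — and the main thing I would be careful with when writing this out in full — is confirming that the two notions of ``complete'' do indeed coincide under the identification between edges of $T$ and vertices of $\LG(T)$, so that no hidden discrepancy (e.g.\ internal versus external edges, or a vertex of $\LG(T)$ belonging to some other $\LG(T')$) slips in. Once that bookkeeping is verified, the theorem follows immediately from the three lemmas plus Theorem~\ref{thm:cuts_equal_compatibility}, with no further combinatorial work required.
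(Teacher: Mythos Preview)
Your proposal is correct and matches the paper's own argument exactly: the paper states that Theorem~\ref{thm:seps_equal_compatibility} follows from Theorem~\ref{thm:cuts_equal_compatibility} together with Lemmas~\ref{lm:cuts_seps},~\ref{lm:parallel}, and~\ref{lm:complete}, which is precisely the chain you spell out. Your observation that the two notions of ``complete'' coincide under the identification $E(T)=V(\LG(T))$ is the only bookkeeping point, and the paper treats it the same way (implicitly, without further comment).
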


%\subsection{Proof of Theorems~\ref{thm:cuts_equal_compatibility} and~\ref{thm:seps_equal_compatibility}}

Theorem~\ref{thm:cuts_equal_compatibility} follows from Theorem~\ref{thm:edge_label}, Lemma~\ref{lm:complete}, and the next result.

\begin{lemma}\label{lm:maximal_seps_maximal_cuts}
The following two statements are equivalent.
\begin{enumerate}[(i)]
\item
There exists a maximal set $\F$ of pairwise parallel minimal separators of $\LG(\Prof)$ where every separator in $\F$ is legal.
\item
There exists a complete set of pairwise parallel nice minimal cuts for $G(\Prof)$.
\end{enumerate}
\end{lemma}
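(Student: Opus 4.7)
Both implications proceed via the bijection between legal minimal separators of $\LG(\Prof)$ and nice minimal cuts of $G(\Prof)$ established by Lemma~\ref{lm:cuts_seps}, which preserves parallelism by Lemma~\ref{lm:parallel}. What remains is to match maximality (as a pairwise parallel set of minimal separators of $\LG(\Prof)$) on one side with completeness (as a pairwise parallel set of nice minimal cuts of $G(\Prof)$) on the other.

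For (i)~$\Rightarrow$~(ii), let $\F$ be a maximal pairwise parallel all-legal set of minimal separators of $\LG(\Prof)$. As in the proof of Theorem~\ref{thm:edge_label}, $H = \LG(\Prof)_{\F}$ is a minimal \emph{restricted} triangulation. The translated set of nice minimal cuts of $G(\Prof)$ is pairwise parallel; only completeness requires work. Fix an internal edge $e = \{u,v\}$ of some input tree $T$ and a clique tree $(T_c, B)$ of $H$. Since $\Inc(u)$ and $\Inc(v)$ are cliques of $H$, there exist $x_u, x_v \in V(T_c)$ with $\Inc(u) \subseteq B(x_u)$ and $\Inc(v) \subseteq B(x_v)$; Lemma~\ref{lm:two_cliques}, combined with the internality of $e$ (so both $\Inc(u)\setminus e$ and $\Inc(v)\setminus e$ are non-empty), forces $x_u \neq x_v$, while coherence gives $e \in B(z)$ for every $z$ on the path from $x_u$ to $x_v$. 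Since $H$ is restricted, each $B(z)\cap E(T)$ is a clique of $\LG(T)$; as $T$ is a tree, this clique is a star $\Inc(w_z)\cap E(T)$ with $w_z \in \{u,v\}$. Walking along the path, the label $w_z$ must flip from $u$ to $v$ at some adjacent pair $\{p,q\}$, whereupon $B(p)\cap B(q)\cap E(T) \subseteq \Inc(u)\cap\Inc(v) = \{e\}$, and $e$ belongs to this intersection by coherence. By Theorem~\ref{thm:parallel_minimal_ct}(ii), $B(p)\cap B(q) \in \F$, and its image in $G(\Prof)$ is the nice minimal cut that isolates $e$ in $T$.

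For (ii)~$\Rightarrow$~(i), view the complete pairwise parallel set $\F_0$ of nice minimal cuts as a pairwise parallel set of legal minimal separators of $\LG(\Prof)$, and extend it to a maximal pairwise parallel set $\F$ of minimal separators of $\LG(\Prof)$. By Theorem~\ref{thm:parallel_minimal_ct}(i), $H = \LG(\Prof)_{\F}$ is a minimal triangulation. I will show every $F \in \F$ is legal. If not, some $F$ contains $e, e' \in E(T)$ with $e \cap e' = \emptyset$, so $\{e,e'\}$ is a non-valid fill-in edge of $H$. Treating $e$ and $e'$ as internal (the leaf-edge case is dispatched via Lemma~\ref{lm:no_loner}, which rules out any minimal separator swallowing all edges incident on a leaf), completeness supplies $F_e, F_{e'} \in \F_0 \subseteq \F$ with $F_e \cap E(T) = \{e\}$ and $F_{e'} \cap E(T) = \{e'\}$. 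Lemma~\ref{lm:two_subtrees} places $e'$ on a definite side of $F_e$ in $G(\Prof)$, hence in $\LG(\Prof)$; parallelism of $F$ with $F_e$ then confines $F\setminus F_e$ to that same side, and symmetrically $F\setminus F_{e'}$ is confined to the side of $F_{e'}$ containing $e$. Combining these side constraints with $\{e,e'\} \subseteq F$ produces a configuration incompatible with $F$ being a minimal separator disconnecting $\LG(\Prof)$ into two components, delivering the contradiction.

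The main obstacle is the converse direction. The contradiction rests on showing that the parallelism constraints with the two singleton-isolating separators $F_e$ and $F_{e'}$ force $F$ into a configuration where, in $G(\Prof)$, any tree-$T$ edges $F$ contains beyond $e$ and $e'$ must lie in the ``middle'' subtree of $T - \{e,e'\}$, making it impossible for $F$ to simultaneously separate both sides of $\LG(\Prof)$. Cleanly carrying out the leaf-edge case via Lemma~\ref{lm:no_loner}, in place of a singleton-isolating cut, is a necessary but technical part of the argument.
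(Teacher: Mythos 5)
Your (i)~$\Rightarrow$~(ii) direction is essentially the paper's own argument --- a clique-tree walk from a bag containing $\Inc(u)$ to one containing $\Inc(v)$, with the flip point located via Lemma~\ref{lm:two_cliques} and coherence, and Theorem~\ref{thm:parallel_minimal_ct}(ii) used to land the resulting bag intersection back in $\F$ --- and it is correct.

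The (ii)~$\Rightarrow$~(i) direction has a genuine gap, which you have flagged yourself without closing. There are two problems. First, your plan needs cuts $F_e, F_{e'} \in \F_0$ isolating $e$ and $e'$ in $T$, but completeness only guarantees such cuts for \emph{internal} edges of $T$; a non-legal separator $F$ may contain a leaf edge of $T$ together with a disjoint edge of $T$, and Lemma~\ref{lm:no_loner} does not dispatch this case --- it only forbids $F$ from containing \emph{all} edges of $G(\Prof)$ incident on a vertex, which says nothing about a single leaf edge lying in $F$ (leaf edges do occur in nice minimal cuts; see $\{1,c\}\in F_3$ in Example~\ref{ex:compatible_cuts}). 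Second, even in the all-internal case, the concluding step (``produces a configuration incompatible with $F$ being a minimal separator'') is asserted rather than derived. The clean fix, which is what the paper does, is to stop trying to isolate $e$ and $e'$ themselves: since $e$ and $e'$ are nonincident edges of the tree $T$, every edge on the $T$-path strictly between them has both endpoints of degree at least two, hence is internal, and separates $e$ from $e'$ in $T$. Pick such an edge $e_3$ and use completeness to obtain $F'\in\F_0$ with $F'\cap E(T)=\{e_3\}$. Lemma~\ref{lm:two_subtrees} places $e$ and $e'$ in different components of $G(\Prof)-F'$, hence (by Observation~\ref{obs:path}) in different components of $\LG(\Prof)-F'$, while parallelism of $F$ and $F'$ forces the two vertices $e,e'\in F\setminus F'$ into a single component of $\LG(\Prof)-F'$ --- an immediate contradiction, with no case split on leaf edges and no auxiliary ``side constraint'' analysis.
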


%\paragraph{Proof of Lemma~\ref{lm:maximal_seps_maximal_cuts}.}
\begin{proof}
\textit{(i) $\Rightarrow$ (ii):}
We show that for every internal edge $e=\{u,v\}$ of an input tree $T$ there exists a minimal separator in $\F$ that contains only vertex $e$ from $\LG(T)$. Then it follows from Lemmas~\ref{lm:cuts_seps} and~\ref{lm:parallel} that $\F$ is a complete set of pairwise parallel nice minimal cuts for display graph $G(\Prof)$.

As shown in the proof of Theorem~\ref{thm:edge_label}, $\LG(\Prof)_{\F}$ is a restricted minimal triangulation of $\LG(\Prof)$. Let $(S, B)$ be a clique tree of $\LG(\Prof)_{\F}$. By definition, the vertices in each of the sets $\Inc(u)$ and $\Inc(v)$ form a clique in $\LG(\Prof)$. Consider any vertex $p$ of $S$ where $\Inc(u) \subseteq B(p)$ and any vertex $q$ of $S$ where $\Inc(v) \subseteq B(q)$. (Since $(S, B)$ is a clique tree of $\LG(\Prof)_{\F}$, such vertices $p$ and $q$ must exist.) Also, by Lemma~\ref{lm:two_cliques}, $p \neq q$, $B(p) \cap (\Inc(v) \setminus \{e\}) = \emptyset$ and $B(q) \cap (\Inc(u) \setminus \{e\}) = \emptyset$.

Let $\pi = p, x_1, x_2, \dots, x_m, q$ be the path from $p$ to $q$ in $S$ where $m \geq 0$. Let $x_0=p$ and $x_{m+1}=q$. Let $x_i$ be the vertex nearest to $p$ in path $\pi$ where $i \in [m+1]$ and $B(x_i) \cap (\Inc(u) \setminus \{e\}) = \emptyset$. Let $F = B(x_{i-1}) \cap B(x_i)$. Then by Theorem~\ref{thm:parallel_minimal_ct}, $F \in \F$. Since $\Inc(u) \cap \Inc(v) = \{e\}$, by the coherence property, $e \in B(x_j)$ for every $j \in [m]$. Thus, $e \in F$. By Lemma~\ref{lm:two_cliques}, $B(x_{i-1}) \cap (\Inc(v) \setminus \{e\}) = \emptyset$. Since $B(x_i) \cap (\Inc(u) \setminus \{e\}) = \emptyset$, $F \cap \Inc(u) = \{e\}$ and $F \cap \Inc(v) = \{e\}$. Thus, for every vertex $e' \in \LG(T)$ where $e \neq e'$ and $e \cap e' \neq \emptyset$, $e' \notin F$. Also, since every separator in $\F$ is legal, we have $f \notin F$ for every vertex $f \in \LG(T)$ where $f \cap e = \emptyset$. Thus, $e$ is the only vertex of $\LG(T)$ in $F$. 

\vspace{5pt}

%\end{proof}

%\begin{lemma}\label{lm:maximal_cuts_equals_maximal_seps}
%If there exists a complete set of pairwise parallel legal minimal cuts for $G(\Prof)$, then there exists a maximal set $\F$ of pairwise parallel minimal separators of $\LG(\Prof)$ where every separator in $\F$ is legal.
%\end{lemma}
%\begin{proof}
%\paragraph{\textit{(i) $\Leftarrow$ (ii):}}\hspace{0pt}\\

\noindent\textit{(i) $\Leftarrow$ (ii):}
Consider any complete set of pairwise parallel nice minimal cuts $\F'$ of $G(\Prof)$. By Lemmas~\ref{lm:cuts_seps} and~\ref{lm:parallel}, $\F'$ is a set of pairwise parallel legal minimal separators of $\LG(\Prof)$. There exists a maximal set $\F$ of pairwise parallel minimal separators where $\F' \subseteq \F$. 
%Using an argument similar to the proof of Theorem~\ref{thm:cuts_seps}, it can be shown that $F$ is a cut of $G(\Prof)$, every connected component of $G(\Prof)-F$ has at least one edge, and $F$ is also a minimal cut of $G$.

Assume that $\F \setminus \F'$ contains a minimal separator $F$ that is not legal.  Then, there must exist a tree $T \in \Prof$ where at least two nonincident edges $e_1=\{x, y\}$ and $e_2=\{x', y'\}$ of $T$ are in $F$. Consider any internal edge $e_3$ in $T$ where $e_1$ and $e_2$ are in different components of $T-e_3$. Such an edge exists because $e_1$ and $e_2$ are nonincident. Since $\F'$ is complete, there exists a cut $F' \in \F'$ where $e_3$ is the only edge of $T$ in $F'$. Since $F$ and $F'$ are in $\F$, they are parallel to each other and vertices $e_1$ and $e_2$ are in the same connected component of $\LG(\Prof)-F'$. Thus, by Observation~\ref{obs:path}, there exists a path between vertices $x$ and $x'$ in $G(\Prof)-F'$ and edges $e_1$ and $e_2$ are also in the same connected component of $G(\Prof)-F'$. But by Lemma~\ref{lm:two_subtrees} that is impossible. %since $F'$ is a legal minimal cut of $G$ and $e_1$ and $e_2$ are in different components of $T-e_3$.

Thus, every separator of $\F \setminus \F'$ is legal and $\F$ is a maximal set of pairwise minimal separators of $\LG(\Prof)$ where every separator in $\F$ is legal.
\qed
\end{proof}

\section{Splits and Cuts}

\label{sec:splits_cuts}

%Theorem~\ref{thm:cuts_equal_compatibility} has a close relationship to the Splits Equivalence Theorem (Theorem~\ref{thm:SET}). 
We first argue that for every nice minimal cut of $G(\Prof)$ we can derive a split of $\Leaves(\Prof)$.

\begin{lemma}\label{lm:cut_gives_split}
Let $F$ be a nice minimal cut of $G(\Prof)$ and let $G_1$ and $G_2$ be the two connected components of $G(\Prof)-F$. Then, $\Leaves(G_1)| \Leaves(G_2)$ is a split of $\Leaves(\Prof)$. 
\end{lemma}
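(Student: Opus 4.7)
My plan is to verify the three defining properties of a split of $\Leaves(\Prof)$: that $\Leaves(G_1)$ and $\Leaves(G_2)$ are disjoint, that their union is $\Leaves(\Prof)$, and that both are non-empty. The first two properties I expect to follow essentially for free, since $F \subseteq E(G(\Prof))$ consists only of edges, so deleting $F$ preserves the vertex set, and by the remark that a minimal cut yields exactly two connected components, $V(G_1)$ and $V(G_2)$ partition $V(G(\Prof))$. Restricting this partition to $\Leaves(\Prof) \subseteq V(G(\Prof))$ gives the required disjointness and coverage at once.

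The substantive part of the argument will be to show that each $\Leaves(G_i)$ is non-empty, and this is where I will use both the nice and the legal properties of $F$. The strategy is, for a fixed $i \in \{1,2\}$, to invoke niceness to pick some edge $e = \{x,y\}$ of $G_i$, and note that $e$ belongs to some input tree $T \in \Prof$. I then apply legality to obtain a vertex $u \in V(T)$ with $F \cap E(T) \subseteq \Inc(u)$, and consider the two subtrees $T_x$ and $T_y$ of $T - e$ containing $x$ and $y$ respectively. Without loss of generality $u$ lies in $T_y$, and then every edge of $F \cap E(T)$ is incident to $u$ and therefore lies inside $T_y$, so $T_x$ contains no edge of $F$. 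Since $T_x$ is a connected subgraph of $G(\Prof) - F$ containing $x \in V(G_i)$, the entire subtree $T_x$ must lie in $G_i$.

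The final step is to produce a vertex of $V(T_x) \cap \Leaves(\Prof)$, and I expect this to be the main place where care is needed. If $x$ itself is a leaf of $T$, then $x \in \Leaves(G_i)$ and I am done. Otherwise $x$ is internal in $T$, so $T_x$ has at least two vertices and hence at least two tree-theoretic leaves; at least one such leaf $\ell$ is distinct from $x$, and because the only edge removed from $T$ to obtain $T_x$ was $e$, which is incident to $x$ but not to $\ell$, the degree of $\ell$ in $T$ equals its degree in $T_x$, so $\ell$ is a leaf of $T$ and hence of $\Prof$. The main obstacle I anticipate is bookkeeping: making sure the argument goes through uniformly regardless of whether the legality witness $u$ coincides with $x$, with $y$, or sits deeper in $T_y$, and cleanly handling the degenerate case in which $x$ is already a leaf of $T$ so that $T_x$ is just the single vertex~$x$.
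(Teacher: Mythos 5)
Your proposal is correct and follows essentially the same route as the paper: use niceness to find an edge $e$ of some input tree $T$ inside each component $G_i$, use legality to conclude that all edges of $F\cap E(T)$ lie in one subtree of $T-e$ so the other subtree avoids $F$ and stays inside $G_i$, and then extract a leaf of $T$ from that subtree. Your version merely reorganizes the paper's case analysis (internal vs.\ non-internal edge, $F\cap E(T)$ empty or not) into the single dichotomy of whether $x$ is a leaf, and is slightly more explicit about why the surviving subtree contains a leaf of $T$.
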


%\paragraph{Proof of Lemma~\ref{lm:cut_gives_split}.}
\begin{proof}%[Proof of Lemma~\ref{lm:cut_gives_split}]
Consider $G_i$ for each $i \in \{1,2\}$. We show that $\Leaves(G_i)$ is non-empty. Since $F$ is nice, $G_i$ contains at least one edge $e$ of $G(\Prof)$. If $e$ is a non-internal edge, then $\Leaves(G_i)$ is non-empty.  Assume that $e=\{u,v\}$ is an internal edge of some input tree $T$. If $F$ does not contain an edge of $T$, then $\Leaves(T) \subseteq \Leaves(G_i)$ and thus $\Leaves(G_i)$ is non-empty. Assume that $F$ contains one or more edges of $T$. Let $T_u$, $T_v$ be the two subtrees of $T-e$. Since $F$ is a nice minimal cut, $F$ contains edges from either $T_u$ or $T_v$ but not both. Without loss of generality assume that $F$ does not contain edges from $T_u$. Then, every edge of $T_u$ is in the same component as $e$. Since $T_u$ contains at least one leaf, $\Leaves(G_i)$ is non-empty.  
Thus, $\Leaves(G_1)|\Leaves(G_2)$ is a split of $\Leaves(\Prof)$.
\qed
\end{proof}

Let $\sigma(F)$ denote the split of $\Leaves(\Prof)$ induced by a  nice minimal cut $F$.  If $\F$ is a set of nice minimal cuts of $G(\Prof)$, $\Sigma(\F)$ denotes the set of all the non-trivial splits in $\bigcup_{F \in \F}\sigma(F)$.  The following result expresses the relationship between complete sets of nice minimal cuts and the compatibility of splits.

\begin{theorem}\label{lm:splits_compatibility}
If $G(\Prof)$ has a complete set of pairwise parallel nice minimal cuts  $\F$, then $\Sigma(\F)$ is compatible and any  compatible tree for $\Sigma(\F)$ is also a compatible tree for $\Prof$.
\end{theorem}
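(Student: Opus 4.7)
\medskip

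\noindent\textbf{Proof proposal.}
The plan is to establish two things: (a) any two splits in $\Sigma(\F)$ are pairwise compatible, so by the Splits-Equivalence Theorem (Theorem~\ref{thm:SET}) there is a phylogenetic tree $S$ on $\Leaves(\Prof)$ realizing a superset of these splits; and (b) any such tree $S$ displays every $T \in \Prof$. The Splits-Equivalence Theorem then takes care of the existence of a compatible supertree for $\Sigma(\F)$, and (b) shows every such tree is a compatible supertree for $\Prof$.

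For part (a), I would take two cuts $F, F' \in \F$ and write $\sigma(F) = \Leaves(G_1)\,|\,\Leaves(G_2)$ and $\sigma(F') = \Leaves(G'_1)\,|\,\Leaves(G'_2)$, where $G_1,G_2$ (resp.\ $G'_1,G'_2$) are the two components of $G(\Prof)-F$ (resp.\ $G(\Prof)-F'$) given by Lemma~\ref{lm:cut_gives_split}. Since $F$ and $F'$ are parallel cuts of $G(\Prof)$, the edges of $F'\setminus F$ lie in a single component of $G(\Prof)-F$, say $G_1$. Thus $G_2$ remains intact in $G(\Prof)-F'$, so $\Leaves(G_2)$ is entirely contained in one side of the split $\sigma(F')$, say $\Leaves(G'_2)$. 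This forces $\Leaves(G_2)\cap \Leaves(G'_1)=\emptyset$, which is exactly the condition from Section~2 for $\sigma(F)$ and $\sigma(F')$ to be compatible splits. Hence $\Sigma(\F)$ is pairwise compatible.

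For part (b), let $S$ be any phylogenetic tree on $\Leaves(\Prof)$ that displays every split in $\Sigma(\F)$, fix $T \in \Prof$, and pick an internal edge $e$ of $T$ with $\sigma_e(T) = X\,|\,Y$. I plan to show $X\,|\,Y \in \Sigma(S_{|\Leaves(T)})$, which, ranging over all internal edges $e$, yields $\Sigma(T) \subseteq \Sigma(S_{|\Leaves(T)})$, i.e.\ $S$ displays $T$. By completeness of $\F$, there is a cut $F \in \F$ in which $e$ is the only edge of $T$; by Lemma~\ref{lm:two_subtrees} the two subtrees of $T-e$, hence the leaf sets $X$ and $Y$, lie in the two different components $G_1,G_2$ of $G(\Prof)-F$. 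Since $X\cup Y = \Leaves(T)$ and $\Leaves(G_1), \Leaves(G_2)$ are disjoint, I get $\Leaves(G_i)\cap \Leaves(T) = X$ and $= Y$ respectively. Because $e$ is internal, both $X$ and $Y$ have size at least two, so $\sigma(F)$ is a non-trivial split and therefore lies in $\Sigma(\F)$; restricting the edge of $S$ that induces $\sigma(F)$ to $\Leaves(T)$ gives exactly the split $X\,|\,Y$ in $S_{|\Leaves(T)}$, as desired.

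The main obstacle, I expect, is in part (a): correctly interpreting the ``parallel cuts'' condition at the level of edges (as opposed to vertex separators in $\LG(\Prof)$) and carefully handling the possibility that some edges of $F'$ also belong to $F$. Once the edges of $F'\setminus F$ are localized to a single component of $G(\Prof)-F$, the rest of the argument is essentially bookkeeping with the intersection condition for split compatibility and with the restriction operation $S_{|\Leaves(T)}$.
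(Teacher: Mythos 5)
Your proof is correct and follows the paper's overall strategy: pairwise compatibility of $\Sigma(\F)$ plus the Splits-Equivalence Theorem for existence, and then, for each internal edge $e$ of an input tree $T$, completeness of $\F$ together with Lemma~\ref{lm:two_subtrees} to produce a cut whose split restricts to $\sigma_e(T)$; part (b) of your argument matches the paper's essentially verbatim. The one place you genuinely diverge is part (a): the paper proves pairwise compatibility (as Lemma~\ref{lm:parallel_implies_compatible}) by contradiction --- assuming all four intersections of the two splits are nonempty, picking witness labels $a,b,c,d$, and constructing two paths in $G(\Prof)-F_1$ that force $F_2$ to meet two distinct components, violating parallelism --- whereas you argue directly that parallelism confines the edges of $F'\setminus F$ to a single component $G_1$ of $G(\Prof)-F$, so $G_2$ survives intact in $G(\Prof)-F'$, its leaf set lands entirely on one side of $\sigma(F')$, and one of the four intersections is empty. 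Your direct version is a little cleaner, trading the path construction for a one-line containment argument; the degenerate case $F'\subseteq F$ that you flag is harmless, since minimality of the cuts forces $F'=F$ there and the splits coincide.
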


\begin{example}
For the cuts of the display graph in Fig.~\ref{fig:example} given in Example 1, we have $\sigma(F_1) =abc | defg$, $\sigma(F_2) = abcfg | de$, $\sigma(F_3) = ab | cdefg$, and $\sigma(F_4) = abcde | fg$.  Note that these splits are pairwise compatible.
\end{example}

The proof of Theorem~\ref{lm:splits_compatibility} uses the following lemma.

\begin{lemma}\label{lm:parallel_implies_compatible}
Let $F_1$ and $F_2$ be two parallel nice minimal cuts of $G(\Prof)$. Then, $\sigma(F_1)$ and $\sigma(F_2)$ are compatible.
\end{lemma}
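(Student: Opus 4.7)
The plan is to unpack the definition of parallel cuts and show that one of the four intersections required by the compatibility criterion for splits must be empty. Let $A_1, A_2$ be the vertex sets of the two components of $G(\Prof) - F_1$, so that $\sigma(F_1) = \Leaves(A_1) \mid \Leaves(A_2)$, and similarly let $B_1, B_2$ be the vertex sets of the two components of $G(\Prof) - F_2$, giving $\sigma(F_2) = \Leaves(B_1) \mid \Leaves(B_2)$. Because $F_1$ and $F_2$ are parallel, at most one component of $G(\Prof) - F_1$ has edges intersecting $F_2$; renaming if necessary, I may assume the subgraph induced on $A_2$ is edge-disjoint from $F_2$.

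Next I would observe that every edge of $G(\Prof)$ with both endpoints in $A_2$ is an edge of the component of $G(\Prof) - F_1$ on $A_2$ (since $F_1$, as a cut, only separates $A_1$ from $A_2$). Combined with the parallelism assumption, this means no edge of $F_2$ lies inside $A_2$, so the subgraph induced by $A_2$ remains intact in $G(\Prof) - F_2$ and in particular is connected there. Hence $A_2$ is contained in a single connected component of $G(\Prof) - F_2$; after possibly swapping $B_1$ and $B_2$, say $A_2 \subseteq B_2$. Then $A_2 \cap B_1 = \emptyset$, and a fortiori $\Leaves(A_2) \cap \Leaves(B_1) = \emptyset$, which by the standard four-way intersection characterization of split compatibility (quoted from Semple and Steel in the preliminaries) gives the result.

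Finally I would invoke Lemma~\ref{lm:cut_gives_split} to justify that $\sigma(F_1)$ and $\sigma(F_2)$ are genuine splits (both parts nonempty), so the compatibility conclusion is meaningful rather than vacuous. The only delicate point, and the one I would spell out carefully, is the step that edges of $F_2$ that happen to also lie in $F_1$ must be ignored: the parallelism condition is stated in terms of $E(H) \cap F_2$ where $H$ is a component of $G(\Prof) - F_1$, so such shared edges are already absent from $G(\Prof) - F_1$ and do not threaten the argument that $A_2$ stays connected after deleting $F_2$. Once that subtlety is handled, the rest is a direct chase through the definitions.
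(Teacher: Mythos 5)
Your proof is correct, and it takes a slightly different route from the paper's. The paper argues by contradiction: assuming all four intersections are nonempty, it picks a label in each, joins the two labels on each side of $\sigma(F_1)$ by paths $\pi_1$ and $\pi_2$ lying in the two different components of $G(\Prof)-F_1$, observes that each path must contain an edge of $F_2$ (since its endpoints are separated by $F_2$), and thus exhibits two edges of $F_2$ in different components of $G(\Prof)-F_1$, violating parallelism. You instead use parallelism positively and directly: the component of $G(\Prof)-F_1$ on $A_2$ carries no edge of $F_2$, hence survives intact in $G(\Prof)-F_2$ and is absorbed into a single component there, giving $A_2\subseteq B_2$ and an empty intersection outright. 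The two arguments are essentially contrapositives of one another, but yours avoids the path-chasing and four-label setup and establishes the stronger structural fact that one side of $\sigma(F_1)$ is nested inside one side of $\sigma(F_2)$, which is the geometric content of split compatibility. One small point to tighten: your parenthetical justification that no edge with both endpoints in $A_2$ belongs to $F_1$ should appeal explicitly to the \emph{minimality} of $F_1$ rather than to its being a cut --- an arbitrary cut may contain edges internal to a component, but if $f\in F_1$ had both endpoints in $A_2$ then $F_1\setminus\{f\}$ would still disconnect $G(\Prof)$, contradicting minimality; this is what forces every edge of $F_1$ to cross between $A_1$ and $A_2$ and makes your induced-subgraph step go through.
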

\begin{proof}
Let $\sigma(F_1) = U_1| U_2$ and $\sigma(F_2) = V_1|V_2$. Assume that $\sigma(F_1)$ and $\sigma(F_2)$ are incompatible. Thus, $U_i \cap V_j \neq \emptyset$ for every $i, j \in \{1,2\}$. Let $a \in U_1 \cap V_1$, $b \in U_1 \cap V_2$, $c \in U_2 \cap V_1$ and $d \in U_2 \cap V_2$.  Since $\{a, b\} \subseteq U_1$, there exists a path $\pi_1$ between leaves $a$ and $b$ in $G(\Prof)-F_1$. But $a$ and $b$ are in different components of $G(\Prof)-F_2$. Thus, an edge $e_1$ of path $\pi_1$ is in the cut $F_2$. Similarly, $\{c,d\} \subseteq U_2$ and there exists a path $\pi_2$ between labels $c$ and $d$ in $G(\Prof)-F_1$. Since $c$ and $d$ are in different components of $G(\Prof)-F_2$, cut $F_2$ contains an edge $e_2$ of path $\pi_2$. But $\pi_1$ and $\pi_2$ are in different components of $G(\Prof)-F_1$, so edges $e_1$ and $e_2$ are in different components of $G(\Prof)-F_1$. Since $\{e_1, e_2\} \subseteq F_2$, the cuts $F_1$ and $F_2$ are not parallel, a contradiction.
\qed
\end{proof}

\paragraph{Proof of Theorem~\ref{lm:splits_compatibility}.}
%\begin{proof}[Proof of Theorem~\ref{lm:splits_compatibility}]
The compatibility of  $\Sigma(\F)$ follows from Lemma~\ref{lm:parallel_implies_compatible} and Theorem~\ref{thm:SET}.  
%By Theorem~\ref{thm:SET}, there exists a tree $S$ such that $\Sigma(S) = \Sigma(\F)$.   
Let $S$ be a compatible tree for $\Sigma(F)$, let $T$ be an input tree of $\Prof$, let $S' = S_{|\Leaves(T)}$, and let  $e$ be any internal edge of $T$. We now show that $S'$ displays $\sigma(e)$. 

Let $\sigma(e) = A|B$. There exists a cut $F \in \F$ where $e$ is the only edge of $T$ in $F$.  By Lemma~\ref{lm:two_subtrees}, since $F$ is minimal, the leaves of sets $A$ and $B$ are in different components of $G(\Prof)-F$. Thus, if $\sigma(F) = A'|B'$ then up to renaming of sets we have $A \subseteq A'$ and $B \subseteq B'$. Because $S$ displays $\sigma(F)$, $S'$ also displays $\sigma(e)$. Since $S'$ displays all the splits of $T$, $T$ can be obtained from $S'$ by contracting zero or more edges~\cite{SempleSteel03}.  Thus, $S$ displays $T$. Since $S$ displays every tree in $\Prof$, $S$ is a compatible tree for $\Prof$.
\qed
%\end{proof}

\section{Characterizing Agreement via Cuts}

\label{sec:agree_cuts}

The following characterization of agreement is similar to the one for tree compatibility given by Theorem~\ref{thm:cuts_equal_compatibility}, except for an additional restriction on the minimal cuts. 

\begin{theorem} \label{thm:AST_cuts}
A profile $\Prof$ has an agreement supertree if and only if $G(\Prof)$ has a complete set $\F$ of pairwise parallel legal minimal cuts where, for every cut $F \in \F$ and for every $T \in \Prof$, there is at most one edge of $T$ in $F$.
\end{theorem}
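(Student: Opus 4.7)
The argument has two directions, both mediated by the correspondence between cuts of $G(\Prof)$ and splits of $\Leaves(\Prof)$. In the forward direction, an agreement supertree $S$ supplies the required cuts directly, one per internal edge of $S$; in the reverse direction, the given cuts supply the splits of an agreement supertree via Theorems~\ref{lm:splits_compatibility} and~\ref{thm:SET}.

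\emph{Forward direction.} Suppose $S$ is an AST for $\Prof$. For each internal edge $e$ of $S$ with split $\sigma_e(S) = A|B$, I will build a cut $F_e$ of $G(\Prof)$ as follows: for every input tree $T$ with $A \cap \Leaves(T)$ and $B \cap \Leaves(T)$ both non-empty, the agreement identity $\Sigma(T) = \Sigma(S_{|\Leaves(T)})$ yields a unique internal edge $e_T$ of $T$ inducing the split $(A \cap \Leaves(T))|(B \cap \Leaves(T))$; put $e_T$ into $F_e$. By construction $F_e$ is legal and contains at most one edge of each input tree; a short argument shows $F_e$ is also a minimal cut separating the $A$-leaves from the $B$-leaves of $G(\Prof)$. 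Setting $\F = \{F_e : e \text{ is an internal edge of } S\}$, completeness follows from the agreement identity, since every internal edge of every input tree realizes the restriction of some $\sigma_e(S)$. For pairwise parallelism, consider $e_1, e_2$ in $S$ with (necessarily compatible) splits $\sigma_i = A_i|B_i$, and assume without loss of generality that $A_1 \subseteq A_2$ and $B_2 \subseteq B_1$. Let $H_A$ and $H_B$ be the two components of $G(\Prof) - F_{e_1}$ containing the $A_1$- and $B_1$-leaves, respectively. I claim that every edge of $F_{e_2} \setminus F_{e_1}$ lies in $H_B$. For trees $T$ on which $\sigma_1$ restricts trivially, contributing to $F_{e_2}$ forces $\Leaves(T) \subseteq B_1$, placing $T$ entirely in $H_B$. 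When both splits restrict non-trivially through distinct edges $e_T^{(1)} \neq e_T^{(2)}$, the geometric crux is that $e_T^{(1)}$ must lie inside the $A_2$-subtree of $T - e_T^{(2)}$: otherwise one of the two subtrees of $T - e_T^{(1)}$ would be contained in the $B_2$-subtree of $T - e_T^{(2)}$, contradicting either $A_1 \cap B_2 = \emptyset$ or the distinctness of the two edges. Consequently one endpoint of $e_T^{(2)}$ sits in the $B_2$-subtree (whose leaves lie in $B_2 \subseteq B_1$) and remains connected to those $B_1$-leaves in $G(\Prof) - F_{e_1}$, placing $e_T^{(2)}$ in $H_B$.

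\emph{Reverse direction, and main obstacle.} Assume $\F$ satisfies the stated properties. By Theorem~\ref{thm:cuts_equal_compatibility}, $\Prof$ is compatible; Theorem~\ref{lm:splits_compatibility} together with Theorem~\ref{thm:SET} then yields the unique tree $S$ with $\Sigma(S) = \Sigma(\F)$, which is already a compatible supertree for $\Prof$. To upgrade compatibility to agreement, I must show $\Sigma(S_{|\Leaves(T)}) \subseteq \Sigma(T)$ for every $T \in \Prof$. A non-trivial split in $\Sigma(S_{|\Leaves(T)})$ arises as $\sigma(F)|_{\Leaves(T)}$ for some $F \in \F$; non-triviality forces $F$ to contain an edge of $T$, and by the ``at most one edge'' hypothesis this edge $e_T$ is unique. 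Lemma~\ref{lm:two_subtrees} then identifies $\sigma(F)|_{\Leaves(T)}$ with $\sigma_{e_T}(T)$, so the restricted split is in $\Sigma(T)$. The principal obstacle is the pairwise-parallelism step of the forward direction: translating ``compatibility of two splits of $S$'' into ``parallelism of the two associated cuts of $G(\Prof)$'' rests on the delicate local analysis, inside each input tree $T$, of how the edges $e_T^{(1)}$ and $e_T^{(2)}$ nest relative to one another.
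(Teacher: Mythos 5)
Your overall architecture matches the paper's: in the forward direction you take, for each internal edge $e$ of an AST $S$, the set $F_e$ of input-tree edges realizing the restriction of $\sigma_e(S)$ (this is exactly the paper's cut function $\Psi$), and in the reverse direction you build $S$ from $\Sigma(\F)$ via Theorem~\ref{lm:splits_compatibility} and Theorem~\ref{thm:SET} and verify $\Sigma(S_{|\Leaves(T)})=\Sigma(T)$ by the ``at most one edge per tree'' hypothesis together with Lemma~\ref{lm:two_subtrees}. The reverse direction is essentially the paper's proof (modulo the routine pass from legal to nice cuts via Lemma~\ref{lm:complete}).

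The genuine gap is in the forward direction, where you assert that ``a short argument shows $F_e$ is also a minimal cut.'' This is false for an arbitrary AST. Take $T_1$ the quartet $ab|cd$, $T_2$ the quartet $ab|ef$, and let $S$ be the AST with a central vertex joined to three cherries $ab$, $cd$, $ef$. For the internal edge $e$ of $S$ with $\sigma_e(S)=ab|cdef$, your $F_e$ consists of the internal edge of $T_1$ and the internal edge of $T_2$; but $G(\Prof)-F_e$ has \emph{three} components (the $cd$-cherry of $T_1$ and the $ef$-cherry of $T_2$ share no labels), so either single edge of $F_e$ is already a cut and $F_e$ is not minimal. This also undercuts your parallelism argument, which begins by naming ``the two components of $G(\Prof)-F_{e_1}$.'' The paper confronts this head-on: it proves only that $\Psi(e)$ is a cut, observes that it need not be minimal, and then introduces an edge-splitting operation on $S$ (replacing $\{u,v\}$ by edges from $u$ to the roots of the subtrees induced by the components on the $v$-side) to produce a \emph{modified} AST whose cut function does yield minimal cuts (Lemma~\ref{lm:minimal_tree}); only then does it extract $\F$. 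Your proof needs this normalization step (or an equivalent repair of each non-minimal $F_e$) before minimality, niceness, and parallelism can be established. A secondary slip: the edge $e_T$ realizing the restricted split need not be \emph{internal} in $T$ (the restriction can be a trivial split), and these pendant edges must still be placed in $F_e$ for it to be a cut at all.
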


\begin{example}
One can verify that the display graph of Fig.~\ref{fig:example} does not meet the conditions of Theorem \ref{thm:AST_cuts} and, thus, the associated profile does not have an AST. On the other hand, for the display graph of Fig.~\ref{fig:example1}, let $\F = \{F_1, F_2, F_3\}$, where $F_1 = \{\{1,2\}, \{4, 5\}\}$, $F_2 = \{\{1,2\}, \{5, 6\}\}$ and $F_3 = \{\{2, 3\}, \{6, d\}\}$. For any given input tree $T$, every cut in $\F$ has at most one edge of $T$. Also, $\F$ is a complete set of pairwise parallel legal minimal cuts. Thus, by Theorem~\ref{thm:AST_cuts}, the input trees of Fig.~\ref{fig:example1} have an AST.
\end{example}

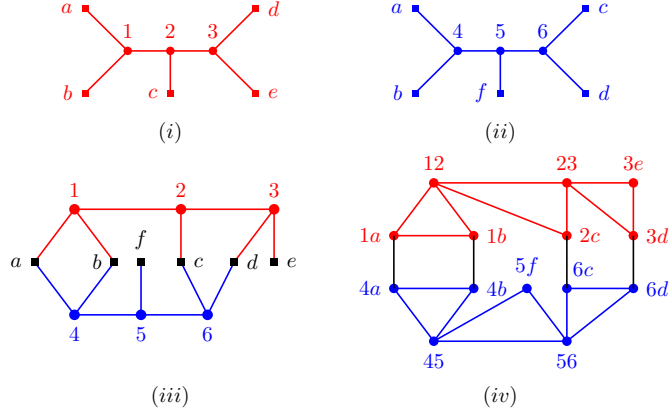
\begin{figure}[t!]
\renewcommand{\figurename}{Fig.}
 \begin{center}
 %!TEX root = cuts_modified.tex 
\begin{tikzpicture} [inner sep = 2mm, semithick,scale=0.7]
\tikzstyle{every node}=[font=\fontsize{10}{10}, scale=0.8];
%\draw [help lines] (0,0) grid (12,8);

\coordinate (one_a) at (1,8);
\begin{scope}[red,scale=0.8]
\draw (one_a) --++(1, -1) coordinate(one_1) --++(-1,-1) coordinate(one_b)  ++(1,1) --++(1,0) coordinate(one_2) --++(0,-1) coordinate(one_c) ++(0,1) --++(1,0) coordinate (one_3) --++(1,1) coordinate(one_d) ++(-1,-1) --+(1,-1) coordinate(one_e);
%\begin{scope}
%\tikzstyle {every node}=[font=\fontsize{10}{10}];
\path [fill] (one_1) circle(0.1) node[above] {$1$};
\path [fill] (one_2) circle(0.1) node[above] {$2$};
\path [fill] (one_3) circle(0.1) node[above] {$3$};
%\end{scope}
\path [fill] (one_a) node[left] {$a$} ++(-0.075,-0.075)  rectangle +(0.15,0.15);
\path [fill] (one_b) node[left] {$b$} ++(-0.075,-0.075)  rectangle +(0.15,0.15);
\path [fill] (one_c) node[left] {$c$} ++(-0.075,-0.075)  rectangle +(0.15,0.15);
\path [fill] (one_d) node[right]{$d$}++(-0.075,-0.075)  rectangle +(0.15,0.15);
\path [fill] (one_e) node[right]{$e$}++(-0.075,-0.075)  rectangle +(0.15,0.15);
\end{scope}

\path(one_c) ++(0, -0.75) coordinate(i)node{$(i)$};
\path (one_d) ++(3,0) coordinate(two_a);
\begin{scope}[blue,scale=0.8]
\draw (two_a) --++(1,-1) coordinate(two_4) --++(-1,-1) coordinate(two_b) ++(1,1) --++(1,0) coordinate(two_5)--++(0,-1) coordinate(two_f) ++(0,1) --++(1,0) coordinate(two_6) --++(1,1) coordinate(two_c) ++(-1,-1) --+(1,-1) coordinate(two_d);
\begin{scope}
%\tikzstyle {every node}=[font=\fontsize{10}{10}];
\path [fill] (two_4) circle(0.1) node[above] {$4$};
\path [fill] (two_5) circle(0.1) node[above] {$5$};
\path [fill] (two_6) circle(0.1) node[above] {$6$};
\end{scope}
\path [fill] (two_a) node[left] {$a$} ++(-0.075,-0.075)  rectangle +(0.15,0.15);
\path [fill] (two_b) node[left] {$b$} ++(-0.075,-0.075)  rectangle +(0.15,0.15);
\path [fill] (two_c) node[right] {$c$} ++(-0.075,-0.075)  rectangle +(0.15,0.15);
\path [fill] (two_d) node[right] {$d$} ++(-0.075,-0.075)  rectangle +(0.15,0.15);
\path [fill] (two_f) node[left] {$f$} ++(-0.075,-0.075)  rectangle +(0.15,0.15);
\end{scope}
\path (two_f) ++(0,-0.75) coordinate(ii)node{$(ii)$};
\path (one_1) ++(-1,-3) coordinate(dis_1);

\begin{scope}[red]
\draw (dis_1) --++(-0.75, -1) coordinate(dis_a) ++(0.75,1) -- ++(0.75,-1) coordinate(dis_b) ++(-0.75,1) -- ++(2,0) coordinate(dis_2) -- ++(0, -1) coordinate(dis_c) ++(0,1)--++(1.75,0) coordinate(dis_3) -- ++(-0.75,-1) coordinate(dis_d) ++(0.75,1) -- ++(0, -1) coordinate (dis_e);
\begin{scope}
%\tikzstyle {every node}=[font=\fontsize{10}{10}];
\path [fill] (dis_1) circle(0.1) node[above] {$1$};
\path [fill] (dis_2) circle(0.1) node[above] {$2$};
\path [fill] (dis_3) circle(0.1) node[above] {$3$};
\end{scope}

\end{scope}

\begin{scope}[blue]
\draw (dis_a) --++(0.75,-1) coordinate(dis_4)--++(0.75,1) ++(-0.75,-1) -- ++(1.25,0) coordinate(dis_5) -- ++(0,1) coordinate(dis_f) ++(0,-1)--++(1.25,0) coordinate(dis_6)--(dis_c);
\draw (dis_6)--(dis_d);
\begin{scope}
%\tikzstyle {every node}=[font=\fontsize{10}{10}];
\path [fill] (dis_4) circle(0.1) node[below] {$4$};
\path [fill] (dis_5) circle(0.1) node[below] {$5$};
\path [fill] (dis_6) circle(0.1) node[below] {$6$};
\end{scope}

\end{scope}

\begin{scope}[black]
\path [fill] (dis_a) node[left] {$a$} ++(-0.075,-0.075)  rectangle +(0.15,0.15);
\path [fill] (dis_b) node[left] {$b$} ++(-0.075,-0.075)  rectangle +(0.15,0.15);
\path [fill] (dis_c) node[right] {$c$} ++(-0.075,-0.075)  rectangle +(0.15,0.15);
\path [fill] (dis_d) node[right] {$d$} ++(-0.075,-0.075)  rectangle +(0.15,0.15);
\path [fill] (dis_e) node[right] {$e$} ++(-0.075,-0.075)  rectangle +(0.15,0.15);
\path [fill] (dis_f) node[above] {$f$} ++(-0.075,-0.075)  rectangle +(0.15,0.15);
\end{scope}

\path (i) ++(0, -5) node{$(iii)$};
\path (dis_3) ++(3,0.5) coordinate(el_12);

\begin{scope}[red]
\draw (el_12)--++(-0.75, -1) coordinate(el_1a) -- ++(1.5,0) coordinate(el_1b)--++(-0.75,1)--++(2.5,0) coordinate(el_23)--++(0,-1) coordinate(el_2c) ++(0,1) --++(1.25,-1) coordinate(el_3d)--++(0,1) coordinate(el_3e)--+(-1.25,0);
\draw (el_2c)--(el_12);

\begin{scope}
%\tikzstyle {every node}=[font=\fontsize{9}{9}];
\path [fill] (el_12) circle(0.09) node[above]{$12$};
\path [fill] (el_1a) circle(0.09) node[left]{$1a$};
\path [fill] (el_1b) circle(0.09) node[right]{$1b$};
\path [fill] (el_23) circle(0.09) node[above]{$23$};
\path [fill] (el_2c) circle(0.09) node[right]{$2c$};
\path [fill] (el_3e) circle(0.09) node[above]{$3e$};
\path [fill] (el_3d) circle(0.09) node[right]{$3d$};
\end{scope}

\end{scope}

\path (el_1a) ++(0,-1) coordinate(el_4a);
\begin{scope}[blue]
\draw (el_4a) --++(1.5,0)coordinate(el_4b)--++(-0.75,-1) coordinate(el_45) --(el_4a);
\draw (el_45)--++(2.5,0) coordinate(el_56)--++(-0.75,1) coordinate(el_5f)-- (el_45); 
\draw (el_56)--++(0,1) coordinate(el_6c)--++(1.25,0) coordinate(el_6d)--(el_56);

\begin{scope}
%\tikzstyle {every node}=[font=\fontsize{9}{9}];
\path [fill] (el_4a) circle(0.09) node[left]{$4a$};
\path [fill] (el_4b) circle(0.09) node[right]{$4b$};
\path [fill] (el_45) circle(0.09) node[below]{$45$};
\path [fill] (el_5f) circle(0.09) node[above]{$5f$};
\path [fill] (el_56) circle(0.09) node[below]{$56$};
\path [fill] (el_6c) circle(0.09) ++(0.3,0) node[above]{$6c$};
\path [fill] (el_6d) circle(0.09) node[right]{$6d$};
\end{scope}
\end{scope}

\begin{scope}[black]
\draw (el_1a)--(el_4a);
\draw (el_1b)--(el_4b);
\draw (el_2c)--(el_6c);
\draw (el_3d)--(el_6d);
\end{scope}

\path (ii) ++(0,-5) node{$(iv)$};

\end{tikzpicture}
 \end{center}
\vspace{-0.3cm}
\caption{\emph{(i)} First input tree. \emph{(ii)} Second input tree, which agrees with the first. \emph{(iii)} Display graph of the input trees. \emph{(iv)} Edge label intersection graph of the input trees, where label $uv$ represents edge $\{u,v\}$ of the display graph.}
 \label{fig:example1}
\end{figure}

%We note the following here. Assume that $\LG(\Prof)$ has a complete set $\F$ of pairwise parallel legal minimal separators where, for every $F \in \F$ and for every $T \in \Prof$, $F$ has at most one vertex of $\LG(T)$. Then, $\LG(\Prof)_\F$ is a restricted triangulation of $\LG(\Prof)$. 

The analogue of Theorem~\ref{thm:AST_cuts} for $\LG(\Prof)$ stated next follows from Theorem~\ref{thm:AST_cuts} and Lemmas~\ref{lm:cuts_seps},~\ref{lm:parallel}, and~\ref{lm:complete} .

\begin{theorem}\label{thm:AST_seps}
A profile $\Prof$ has an agreement supertree if and only if $\LG(\Prof)$ has a complete set $\F$ of pairwise parallel legal minimal separators where, for every $F \in \F$ and every $T \in \Prof$, there is at most one vertex of  $\LG(T)$ in $F$.
\end{theorem}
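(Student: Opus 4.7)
The plan is to deduce Theorem~\ref{thm:AST_seps} from Theorem~\ref{thm:AST_cuts} by transferring every clause of the characterization across the dictionary between $G(\Prof)$ and $\LG(\Prof)$ already assembled in Section~\ref{sec:DG_ELIG}. Concretely, Theorem~\ref{thm:AST_cuts} tells us that $\Prof$ has an AST iff $G(\Prof)$ admits a complete set $\F$ of pairwise parallel legal minimal cuts such that $|F \cap E(T)| \le 1$ for every $F \in \F$ and every $T \in \Prof$; the task is to show that this latter condition is equivalent to the analogous statement about $\LG(\Prof)$.

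For the forward direction, starting from such an $\F$ in $G(\Prof)$, I would first invoke the argument in the proof of Lemma~\ref{lm:complete} to extract a complete subset $\F' \subseteq \F$ consisting of nice minimal cuts. The at-most-one-edge-per-tree condition is trivially inherited under taking subsets, so $\F'$ still satisfies it. Then, by Lemma~\ref{lm:cuts_seps}, each $F \in \F'$ is also a legal minimal separator of $\LG(\Prof)$, and by Lemma~\ref{lm:parallel} the family remains pairwise parallel in $\LG(\Prof)$. The crucial observation is purely notational: since $V(\LG(T)) = E(T)$ by definition of the line graph, the statement ``$F$ contains at most one edge of $T$'' in $G(\Prof)$ is the same statement as ``$F$ contains at most one vertex of $\LG(T)$'' in $\LG(\Prof)$, and similarly ``$e$ is the only edge of $T$ in $F$'' translates verbatim into ``$e$ is the only vertex of $\LG(T)$ in $F$''. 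Hence completeness, parallelism, legality, and the at-most-one restriction all carry over, giving the required family in $\LG(\Prof)$.

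The reverse direction is symmetric. Given a complete set of pairwise parallel legal minimal separators of $\LG(\Prof)$ with at most one vertex of each $\LG(T)$ per separator, Lemma~\ref{lm:cuts_seps} and Lemma~\ref{lm:parallel} translate the family into a complete set of pairwise parallel nice minimal cuts of $G(\Prof)$ with at most one edge of each $T$ per cut. Since every nice cut is in particular legal, Theorem~\ref{thm:AST_cuts} yields an AST for $\Prof$.

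There is no genuine obstacle here: the theorem is essentially a restatement of Theorem~\ref{thm:AST_cuts} through the line-graph/cut--separator correspondence. The only point requiring verification is that the legal-to-nice passage of Lemma~\ref{lm:complete} respects the extra ``at most one edge of $T$ per $F$'' constraint, and this is immediate because the passage only discards cuts from $\F$ and the constraint is monotone under subsets.
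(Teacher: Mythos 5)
Your proposal is correct and follows exactly the route the paper intends: the paper itself states that Theorem~\ref{thm:AST_seps} ``follows from Theorem~\ref{thm:AST_cuts} and Lemmas~\ref{lm:cuts_seps},~\ref{lm:parallel}, and~\ref{lm:complete}'' without further detail, and your write-up simply makes that deduction explicit, including the one point worth checking (that passing to a minimal complete subfamily of nice cuts preserves the at-most-one-edge-per-tree condition, since it is monotone under subsets).
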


Theorem~\ref{thm:AST_cuts} follows from Lemma~\ref{lm:complete} and the next result.

\begin{lemma}\label{lm:AST_cuts}
A profile $\Prof$ has an agreement supertree if and only if $G(\Prof)$ has a complete set $\F$ of pairwise parallel nice minimal cuts where, for every cut $F \in \F$ and every $T \in \Prof$, there is at most one edge of $T$ in $F$.
\end{lemma}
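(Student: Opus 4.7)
The plan is to prove both directions using the compatibility machinery from Theorems~\ref{thm:cuts_equal_compatibility} and~\ref{lm:splits_compatibility}, with the extra ``at most one edge per tree'' condition capturing exactly the gap between compatibility and agreement.

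For the $(\Leftarrow)$ direction, given such a $\F$, Theorem~\ref{thm:cuts_equal_compatibility} says $\Prof$ is compatible, and by Theorem~\ref{lm:splits_compatibility} together with Theorem~\ref{thm:SET} the pairwise compatible splits $\Sigma(\F)$ define a compatible supertree $S$ of $\Prof$ with $\Sigma(S) = \Sigma(\F)$. To upgrade to agreement I would show $\Sigma(S_{|\Leaves(T)}) \subseteq \Sigma(T)$ for every $T$ (the reverse inclusion being compatibility). For any non-trivial $\tau \in \Sigma(S_{|\Leaves(T)})$, we have $\tau = \sigma|_{\Leaves(T)}$ for some $\sigma \in \Sigma(\F)$, witnessed by a cut $F \in \F$ with $\sigma(F) = \sigma$. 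The hypothesis forces $|F \cap E(T)| \leq 1$: if $|F \cap E(T)| = 0$, then $T$ lies entirely in one component of $G(\Prof) - F$, so $\tau$ would be trivial, a contradiction; if $|F \cap E(T)| = 1$, Lemma~\ref{lm:two_subtrees} gives $\tau = \sigma_{e_T}(T)$, and since $\tau$ is non-trivial the unique edge $e_T$ must be internal, hence $\tau \in \Sigma(T)$.

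For the $(\Rightarrow)$ direction, start from an AST $S$, which we may assume is reduced so that every internal edge of $S$ restricts to a non-trivial split on at least one input tree (any internal edge failing this can be contracted while preserving agreement with every input tree). For each internal edge $e$ of $S$ with split $A_e | B_e$, I would build a cut $F_e$ in $G(\Prof)$ by including, for each tree $T^*$ with both $A_e \cap \Leaves(T^*)$ and $B_e \cap \Leaves(T^*)$ non-empty, the unique edge of $T^*$ whose removal induces this partition of $\Leaves(T^*)$ (an internal edge when both sides have size $\geq 2$, a leaf edge when one side is a singleton). By construction each $F_e$ has at most one edge per tree. Completeness of $\F = \{F_e : e \text{ internal in } S\}$ follows because for every internal edge $e'$ of every $T$ the identity $\Sigma(S_{|\Leaves(T)}) = \Sigma(T)$ yields a unique internal edge $e$ of $S$ with $\sigma_e(S)|_{\Leaves(T)} = \sigma_{e'}(T)$, making $e'$ the unique edge of $T$ in $F_e$. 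Pairwise parallelism is argued in the style of Lemma~\ref{lm:parallel_implies_compatible} from the pairwise compatibility of the splits of $S$.

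The main obstacle is verifying that each $F_e$ is a nice minimal cut of $G(\Prof)$ realising $\sigma_e(S)$ as its split. Niceness is immediate from the reduction step, which guarantees some input tree $T$ contributing a subtree with at least two leaves to each component. The delicate point is minimality: a singleton leaf $\ell \in A_e \cap \Leaves(T^*)$ whose leaf edge we cut in $T^*$ could in principle become an isolated third component of $G(\Prof) - F_e$. I would rule this out by using the AST structure to trace $\ell$ through the display graph: since $\ell$ is placed on side $A_e$ in $S$, either $\ell$ appears in some tree $T$ with $|A_e \cap \Leaves(T)| \geq 2$, which links $\ell$ to the $A_e$-side component via $T$'s subtree, or, failing that, a leaf-reattachment argument on $S$ lets us re-choose the AST to eliminate the obstruction while preserving agreement with every input tree.
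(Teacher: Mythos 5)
Your ``cuts $\Rightarrow$ AST'' direction is essentially the paper's: build $S$ with $\Sigma(S)=\Sigma(\F)$ via Theorem~\ref{lm:splits_compatibility} and Theorem~\ref{thm:SET}, then use completeness for $\Sigma(T)\subseteq\Sigma(S_{|\Leaves(T)})$ and the ``at most one edge per tree'' condition plus Lemma~\ref{lm:two_subtrees} for the reverse inclusion; that part is sound. The converse direction, however, has two genuine gaps. First, minimality of the cuts $F_e$ (the paper's $\Psi(e)$): you identify only the failure mode of a singleton leaf becoming an isolated component, but that is not the general obstruction. Removing $F_e$ leaves each input tree's restriction to one side of $\sigma_e(S)$ as a connected subtree, yet these per-tree subtrees on a given side are linked to one another only through shared leaf labels, so one side of the cut can fragment into several components even when every tree contributes two or more leaves to it. Your proposed repair (``link $\ell$ via $T$'s subtree, or else a leaf-reattachment argument'') neither covers this case nor specifies a correct operation. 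The paper instead proves $F_e$ is a cut with vertex bipartition $\{V_u,V_v\}$ (Lemma~\ref{lm:psi_is_cut}), characterizes minimality as ``exactly two components,'' and repairs non-minimal edges by an explicit \emph{edge-splitting} operation that detaches the fragments of $S_v$ as separate children of $u$ (introducing a multifurcation), yielding Lemma~\ref{lm:minimal_tree}.

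Second, pairwise parallelism: you invoke ``the style of Lemma~\ref{lm:parallel_implies_compatible} from the pairwise compatibility of the splits of $S$,'' but that lemma runs in the opposite direction, and compatibility of the induced splits of $\Leaves(\Prof)$ does not by itself force the cuts to be parallel --- parallelism is a statement about which components of $G(\Prof)-F_{e_1}$ contain edges of $F_{e_2}$, and the splits only constrain where the leaves land. The paper's argument uses the tree structure of $S$ directly: for distinct internal edges $e_1,e_2$ there are endpoints $x\in e_1$, $y\in e_2$ with $L_x\subseteq L_y$, and every edge of $\Psi(e_1)$ is then either in $\Psi(e_2)$ or contained in $V_y$, which lies in a single component of $G(\Prof)-\Psi(e_2)$ by Lemma~\ref{lm:psi_is_cut}. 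You would need to supply arguments of this kind (or equivalents) for both points before the forward direction is complete.
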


The rest of the section is devoted to the proof of Lemma~\ref{lm:AST_cuts}. 

Let $S$ be an AST of $\Prof$ and let $e=\{u,v\}$ be an edge of $S$.  Let $S_u$ and $S_v$ be the subtrees of $S-e$  containing $u$ and $v$, respectively.  Let $L_u = \Leaves(S_u)$ and $L_v = \Leaves(S_v)$. Thus, $\sigma_e(S)= L_u|L_v$. Assume that there exists an input tree $T$ where $\Leaves(T) \cap L_x \neq \emptyset$ for each $x \in \{u,v\}$. Then there exists an edge $f \in E(T)$ where, if $\sigma_{f}(T)=A_1|A_2$, then $A_1 \subseteq L_u$ and $A_2 \subseteq L_v$. (If there were no such edge, $S_{|\Leaves(T)}$ would contain a split that is not in $T$ and would thus not be isomorphic to $T$.) We call $e$ an \emph{agreement edge} of $S$ corresponding to edge $f$ of $T$. Note that there does not exist any other edge $f'$ of $T$ where $e$ is also an agreement edge of $S$ with respect to edge $f'$ of $T$.

Given an AST $S$ of $\Prof$, we define a function $\Psi$ from $E(S)$ to subsets of edges of $G(\Prof)$ as follows. For every $e \in E(S)$, an edge $f$ of an input tree $T$ is in $\Psi(e)$  if and only if $e$ is an agreement edge of $S$ corresponding to edge $f$ of $T$. Observe that $\Psi$ is uniquely defined. We call $\Psi$ the \emph{cut function} of $S$. Given an edge $e \in E(S)$, we define a set $V_x$ for every $x \in e$ as follows. For every $T \in \Prof$, $V_x$ contains all the vertices of the minimal subtree of $T$ connecting the labels in $\Leaves(T) \cap L_x$. Note that if $e=\{u,v\}$ then $\{V_u, V_v\}$ is a partition of $V(G(\Prof))$.

\begin{lemma}\label{lm:psi_is_cut}
Let $S$ be an AST of $\Prof$ and let $\Psi$ be the cut function of $S$. Then,
\begin{enumerate}[(i)]
\item for every edge $e \in E(S)$, $\Psi(e)$ is a cut of $G(\Prof)$ and
\item for any edge $e \in E(S)$, $\Psi(e)$ is a minimal cut of $G(\Prof)$ if and only if $G(\Prof)-\Psi(e)$ has exactly two connected components.
\end{enumerate}
\end{lemma}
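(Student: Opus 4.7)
The plan is to read off both claims from the partition $\{V_u, V_v\}$ introduced just before the lemma, using the fact that, by construction, $\Psi(e)$ is precisely the set of edges of $G(\Prof)$ that cross this partition.

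For part (i), I would take an arbitrary edge $\varepsilon$ of $G(\Prof) - \Psi(e)$, say with $\varepsilon \in E(T)$ for some input tree $T$, and show that both endpoints of $\varepsilon$ lie on the same side of $\{V_u, V_v\}$. If $\Leaves(T) \cap L_u = \emptyset$ or $\Leaves(T) \cap L_v = \emptyset$, then the definition of $V_x$ places every vertex of $T$ on a single side and there is nothing to check. Otherwise $T$ has leaves on both sides, and so admits an agreement edge $f \in E(T) \cap \Psi(e)$; removing $f$ from $T$ yields two subtrees whose leaf sets are $\Leaves(T) \cap L_u$ and $\Leaves(T) \cap L_v$, and since $\varepsilon \neq f$, $\varepsilon$ lies inside one of these subtrees, so its endpoints belong to the corresponding $V_x$. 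Combined with $L_u \subseteq V_u$ and $L_v \subseteq V_v$ being non-empty, this shows that $G(\Prof) - \Psi(e)$ has no edge between $V_u$ and $V_v$, so $\Psi(e)$ is a cut.

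For part (ii), the forward direction is immediate from the remark in the Preliminaries that any minimal cut disconnects its graph into exactly two connected components. For the converse, suppose $G(\Prof) - \Psi(e)$ has exactly two components. By part (i), $\Psi(e)$ separates $V_u$ from $V_v$, and since both are non-empty these two components must coincide with $V_u$ and $V_v$ as vertex sets. Now for any $f = \{p, q\} \in \Psi(e)$, a short check against the definition of an agreement edge shows $p \in V_u$ and $q \in V_v$, so $p$ and $q$ lie in different components of $G(\Prof) - \Psi(e)$; reinserting $f$ reconnects the graph, so $\Psi(e) \setminus \{f\}$ is not a cut. Hence $\Psi(e)$ is minimal.

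The main obstacle is bookkeeping: three different kinds of edges --- the supertree edge $e$, an agreement edge $f$ in some input tree, and an arbitrary display-graph edge $\varepsilon$ --- must be kept straight, and the degenerate cases (a subtree of $T - f$ consisting of a single leaf, or an input tree whose leaves lie entirely on one side of $\sigma_e(S)$) must be checked to behave consistently with the partition $\{V_u, V_v\}$. Once the trichotomy ``an input-tree edge lies on the $u$-side, on the $v$-side, or is itself the agreement edge $f$'' is set up, both parts fall out.
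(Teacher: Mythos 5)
Your proof is correct and takes essentially the same route as the paper's: both arguments establish that $\Psi(e)$ is exactly the set of edges of $G(\Prof)$ crossing the partition $\{V_u, V_v\}$ (splitting into the case where the input tree $T$ has an agreement edge in $\Psi(e)$ and the case where all of $\Leaves(T)$ lies on one side), and then read off parts (i) and (ii) from that fact together with the non-emptiness of $V_u$ and $V_v$. The only difference is cosmetic --- the paper proves (i) by contradiction with a hypothetical crossing edge, whereas you argue directly that every edge of $G(\Prof)-\Psi(e)$ is non-crossing.
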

\begin{proof}
$(i)$ Let $e=\{u,v\}$. We show that $G(\Prof)-\Psi(e)$ does not contain an edge whose endpoints are in distinct sets of $\{V_u, V_v\}$. Assume the contrary. Let $f=\{x,y\}$ be an edge of $G(\Prof)-\Psi(e)$ where $x \in V_u$ and $y \in V_v$. 
%Since $f$ is an edge of $G(\Prof)$, $f$ is an edge of some input tree $T$. 
Since $f \in G(\Prof) - \Psi(e)$, $f \notin \Psi(e)$. Suppose $f$ is an edge of input tree $T$. There are two cases.
\begin{enumerate}
\item \emph{$\Psi(e)$ does not contain an edge of $T$.} Then, there exists an endpoint $p$ of $e$ where $\Leaves(T) \subseteq L_p$. Without loss of generality, let $u = p$. Then, $V(T) \subseteq V_u$ and thus $y \in V_u$, a contradiction. 
%\item $(\Leaves(T) \subseteq L_u)$ This would imply that both the endpoints of $f$ are in $V_u$, a contradiction.
%\item $(L_x \subseteq L_u) \wedge (L_y \subseteq L_v)$  Then $e$ is an agreement edge of $S$ corresponding to edge $f$ of $T$. This implies $f \in \Psi(e)$ and thus $G(\Prof)-\Psi(e)$ does not contain edge $f$, a contradiction.
\item \emph{$\Psi(e)$ contains an edge $f' \neq f$ of $T$.} Let $f'=\{r,s\}$ and let $L_r \subseteq L_u$ and $L_s \subseteq L_v$. Let $x$,$r$ be the vertices of $f$ and $f'$ where $L_x \subset L_r$. Since $T$ is a phylogenetic tree, such vertices $x$ and $r$ exist. Since $L_r \subseteq L_u$, both the endpoints of $f$ are in $V_u$, a contradiction.
\end{enumerate}
Thus, $G(\Prof)-\Psi(e)$ does not contain an edge whose endpoints are in different sets of $\{V_u, V_v\}$. Since $V_u$ and $V_v$ are non-empty, it follows that $\Psi(e)$ is a cut of $G(\Prof)$.

$(ii)$ The ``only if'' part follows from the definition of a minimal cut. We now prove the ``if'' part. Let $e=\{u,v\}$. Assume that $G(\Prof)-\Psi(e)$ has exactly two connected components. From the proof of $(i)$, $V_u$ and $V_v$ are the vertex sets of those two connected components. Consider any edge $f \in \Psi(e)$. The endpoints of $f$ are in different sets of  $\{V_u, V_v\}$ and thus are in different connected components of $G(\Prof)-\Psi(e)$. This implies that $G(\Prof)-(\Psi(e) \setminus \{f\})$ is connected. Thus, if $G(\Prof)-\Psi(e)$ has exactly two connected components, $\Psi(e)$ is a minimal cut of $G(\Prof)$.
\qed 

\hspace{0cm}
\end{proof}

Let $S$ be an AST of $\Prof$ and let $e$ be an edge of $S$. Although the preceding result shows that $\Psi(e)$ is a cut of $G(\Prof)$, $\Psi(e)$ may not be minimal.  We now argue that we can always construct an agreement supertree whose cut function gives minimal cuts. 

Suppose $e = (u,v)$ is a an edge of $S$ where $\Psi(e)$ is not minimal.   Let $\{L_1, \dots, L_m\}$ be the partition of $L_v$ where for every $i \in [m]$, $L_i =\Leaves(C) \cap L_v$ for some connected component $C$ in $G(\Prof)-\Psi(e)$.  We assume without loss of generality that $m > 1$ (if not, we can just exchange the roles of $u$ and $v$).  
Let $R_v$ be the rooted tree derived from $S_v$ by distinguishing vertex $v$ as the root.  Let  $R_{v,i}$ be the (rooted) tree obtained from the minimal subtree of $R_v$ connecting the labels in $L_i$ by distinguishing the vertex closest to $v$ as the root and suppressing every other vertex that has degree two. 
To \emph{split edge $e$ at $u$} is to construct a new tree $S'$ from $S$ in two steps:  (i) delete the vertices of $R_v$ from $S$ and (ii) for every $i \in [m]$, add an edge from $u$ to the root of $R_{v,i}$.

We can show the following by repeatedly splitting edges that do not correspond to minimal cuts.  For brevity, we omit the proof.

\begin{lemma}\label{lm:minimal_tree}
If $\Prof$ has an AST, then it has an AST $S$ of $\Prof$ whose cut function $\Psi$ satisfies the following: For every edge $e \in S$, $\Psi(e)$ is a minimal cut of $G(\Prof)$.
\end{lemma}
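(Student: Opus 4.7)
The plan is to start with an arbitrary AST $S$ of $\Prof$ and iteratively apply the edge-splitting operation defined just before the statement, until every cut in the image of the cut function is minimal. Let $\Psi$ be the cut function of $S$. If some edge $e = \{u,v\}$ has $\Psi(e)$ non-minimal, then by Lemma~\ref{lm:psi_is_cut}(ii) the graph $G(\Prof) - \Psi(e)$ has more than two connected components, so after swapping the roles of $u$ and $v$ if necessary we may assume the parameter $m$ in the partition $\{L_1, \dots, L_m\}$ of $L_v$ satisfies $m \geq 2$. Form $S'$ by splitting $e$ at $u$.

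The heart of the argument is the following structural observation: for every input tree $T \in \Prof$, the set $\Leaves(T) \cap L_v$ is contained in a single piece $L_{i_T}$ of the partition. Since $e$ is an agreement edge of $S$ corresponding to at most one edge $f_T$ of $T$, the intersection $\Psi(e) \cap E(T)$ is either empty or equal to $\{f_T\}$. Hence the component of $T - f_T$ carrying the leaves in $\Leaves(T) \cap L_v$ uses no edge of $\Psi(e)$; it is therefore a connected subgraph of $G(\Prof) - \Psi(e)$, and its leaves all lie in one and the same component, i.e. in a single $L_{i_T}$.

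Using this observation, I would verify that $S'$ is still an AST by showing that $(S')_{|\Leaves(T)} = S_{|\Leaves(T)}$ for every $T \in \Prof$. In $S$, the restriction on $\Leaves(T)$ routes every $L_v$-leaf of $T$ through the edge $e$ and then into the $R_{v,i_T}$ portion of $R_v$. In $S'$ the same leaves attach directly to $u$ through the new edge to the root of $R_{v,i_T}$, while the pieces $R_{v,j}$ for $j \neq i_T$ contribute no leaves of $\Leaves(T)$ and are irrelevant to the restriction. Once one checks that the degree-two suppressions on both sides cancel out (the vertex $v$, which was degree two in the restriction of $S$, simply disappears on the $S'$ side as well), the two restrictions coincide as labeled unrooted trees, so $(S')_{|\Leaves(T)}$ is isomorphic to $T$.

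For termination I would observe that each split strictly increases $|E(S')|$: the single edge $e$ is replaced by $m \geq 2$ new edges while no other edges are disturbed. Since any phylogenetic tree on $\Leaves(\Prof)$ has at most $2|\Leaves(\Prof)| - 3$ edges, only finitely many splits are possible, so the process must halt at an AST whose cut function assigns a minimal cut to every edge. The main obstacle is the step verifying $(S')_{|\Leaves(T)} = S_{|\Leaves(T)}$: intuitively the surgery is invisible to each input tree, but making this precise requires careful bookkeeping of which vertices acquire degree two in each restriction and of how the roots of the new trees $R_{v,i_T}$ interact with $u$ after suppression.
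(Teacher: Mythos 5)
Your overall strategy---repeatedly applying the edge-splitting operation until every $\Psi(e)$ is a minimal cut---is exactly the route the paper indicates (it states that the lemma ``can be shown by repeatedly splitting edges that do not correspond to minimal cuts'' and omits the details). Your central structural observation is also the right one and is correct: since $\Psi(e)\cap E(T)$ is either empty or the single edge $f_T$, the component of $T-f_T$ carrying $\Leaves(T)\cap L_v$ avoids $\Psi(e)$, is therefore connected in $G(\Prof)-\Psi(e)$, and so its leaves land in a single block $L_{i_T}$; this is what makes the surgery invisible to each input tree. One small inaccuracy in that part: $v$ need not have degree two in $S_{|\Leaves(T)}$. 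If $L_{i_T}$ meets several branches of $S_v$ incident to $v$, then $v$ survives the restriction of $S$ and reappears as the root of $R_{v,i_T}$ in $S'$; the two restrictions still coincide, but the bookkeeping you defer is a bit more delicate than ``$v$ disappears on both sides.''

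The genuine gap is termination. The claim that a split ``replaces the single edge $e$ by $m\geq 2$ new edges while no other edges are disturbed'' is false: the operation deletes \emph{all} of $R_v$ and rebuilds it from the minimal subtrees of the $L_i$ with degree-two vertices suppressed, so edges of $S_v$ can be destroyed wholesale. Concretely, suppose $v$ is adjacent in $S$ only to $u$ and to two leaves $\ell_1,\ell_2$ that lie in different components of $G(\Prof)-\Psi(e)$. Then $m=2$, each $R_{v,i}$ is the single vertex $\ell_i$, and $S'$ attaches $\ell_1$ and $\ell_2$ directly to $u$, so $|E(S')|=|E(S)|-1$. (One can also construct examples where the edge count strictly increases, so $|E(S)|$ is not monotone in either direction under splitting.) Hence your potential function does not bound the number of iterations, and the proof as written does not establish that the process halts. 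You need a different well-founded measure---for instance, one based on how the cuts $\Psi'(e_i)$ of the new edges $\{u,r_i\}$ relate to $\Psi(e)$ (each such cut is a subset of $\Psi(e)$, consisting of the edges $f_T$ of those trees $T$ whose $L_v$-leaves lie in $L_i$), combined with a check that the cuts assigned to the surviving edges of $S_u$ and of the $R_{v,j}$ do not become ``less minimal.'' Until such a measure is exhibited and verified, the iteration is not known to terminate.
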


\paragraph{Proof of Lemma~\ref{lm:AST_cuts}.}
%\begin{proof}[Proof of Lemma~\ref{lm:AST_cuts}]
($\Leftarrow$) Assume that $\Prof$ has an AST.  Then, by Lemma~\ref{lm:minimal_tree},  $\Prof$ has an AST $S$ whose cut function $\Psi$ has the property that, for every edge $e \in E(S)$, $\Psi(e)$ is a minimal cut of $G(\Prof)$. Let $\F$ be the set of all $\Psi(e)$ such that $e$ is an internal edge of $S$. Then, $\F$ is a set of minimal cuts of $G(\Prof)$.  Further, by definition of $\Psi$, for every $F \in \F$ and for every $T \in \Prof$, $F$ contains at most one edge of $T$. Thus every cut in $\F$ is legal. We now prove that $\F$ is a complete set of pairwise parallel nice minimal cuts of $G(\Prof)$. 

We first prove that every cut in $\F$ is nice. Consider any $F \in \F$. Let $e=\{u,v\}$ be the internal edge of $S$ where $\Psi(e)=F$. Let $T$ be an input tree that has an internal edge $f$ in $\Psi(e)$. Since $e$ is an internal edge at least one such input tree exists; otherwise $\Psi(e)$ is not a minimal cut. Now, by definition, $f$ is the only edge of $T$ in $\Psi(e)$, so, by Lemma~\ref{lm:two_subtrees}, each of the two connected components of $G(\Prof)-\Psi(e)$ has at least one non-internal edge of $T$. Hence, $F$ is a nice minimal cut of $G(\Prof)$.

To prove that the cuts in $\F$ are pairwise parallel, we argue that for any two distinct internal edges $e_1$ and $e_2$ of $S$, $\Psi(e_1)$ and $\Psi(e_2)$ are parallel. There exist vertices $x \in e_1$ and $y \in e_2$ where $L_x \subseteq L_y$.  For every edge $f \in \Psi(e_1)$, we show that $f \in \Psi(e_2)$ or $f \subseteq V_y$.
%We show that for every $f \in \Psi(e_1)$, $f \in \Psi(e_2)$ or $f \subseteq V_y$. 
It then follows that $\Psi(e_1)$ and $\Psi(e_2)$ are parallel. Let $f$ be an edge of input tree $T$.
% in $\Psi(e_1)$. 
Then there exists $z \in f$ where $L_z \subseteq L_x$. Thus, $L_z \subseteq L_y$ and $z \in V_y$. By Lemma~\ref{lm:psi_is_cut}, all the vertices of $V_y$ are in the same connected component of $G(\Prof)-\Psi(e_2)$. Thus, $f \in \Psi(e_2)$ or $f \subseteq V_y$. 

Lastly, we show that $\F$ is complete. Consider any internal edge $f=\{p,q\}$ of some input tree $T$. Since $S$ is an AST of $\Prof$, there exists an edge $e=\{u,v\}$ where, up to relabeling of sets, $L_p \subseteq L_u$ and $L_q \subseteq L_v$. Thus, $e$ is an agreement edge of $S$ corresponding to $f$, so $f \in \Psi(e)$. Since $f$ is an internal edge, $e$ is also an internal edge of $S$ and thus $\Psi(e) \in \F$. Hence, for every internal edge $f$ of an input tree there is a cut $F \in \F$ where $f \in F$. Thus, $\S$ is complete.

%\hspace{0.2pt}

($\Rightarrow$) Assume that there exists a complete set $\F$ of pairwise parallel nice minimal cuts of $G(\Prof)$ where, for every $F \in \F$ and every $T  \in \Prof$, $F$ contains at most one edge of $T$. By Theorem~\ref{lm:splits_compatibility}, $\Sigma(\F)$ is compatible and, by Theorem~\ref{thm:SET}, there exists an unrooted tree $S$ where $\Sigma(\F)=\Sigma(S)$.  We prove that $S$ is an AST of $\Prof$ by showing that $\Sigma(S_{|\Leaves(T)}) = \Sigma(T)$ for every input tree $T \in \Prof$.

%we first show that for every tree $T \in \Prof$, $S$ displays $T$. We then show that, for any non-trivial split $P_1|P_2$  of $S$ and input tree $T$ where $\LG(T) \cap P_i \neq \emptyset$ 

Consider an input tree $T$ of $\Prof$. Let $X_1|X_2$ be the non-trivial split of $T$ corresponding to edge $f \in E(T)$. Since $\F$ is complete, there exists a cut $F \in \F$ where $f \in \F$. If $\sigma(F)=Y_1|Y_2$, by Lemma~\ref{lm:two_subtrees}, up to relabeling of sets, $X_i \subseteq Y_i$ for every $i \in \{1,2\}$. Since $\sigma(F)$ is a split of $S$, this implies that $\Sigma(T) \subseteq \Sigma(S_{|\Leaves(T)})$.

Consider any non-trivial split $P_1|P_2$ of $\Sigma(S)$ where $P_i \cap \Leaves(T) \neq \emptyset$ for each $i \in \{1,2\}$. Let $Q_i = P_i \cap \Leaves(T)$ for each $i \in \{1,2\}$. Since $\Sigma(S)=\Sigma(\F)$, there exists a cut $F \in \F$ where $\sigma(F)=P_1|P_2$. 
%Since $P_1|P_2 \in \Sigma(S)$, there exists an edge $e \in E(S)$ where $\Sigma(\Psi(e)) = P_1|P_2$. 
Since $P_1$ and $P_2$ are in different connected components of $G(\Prof)-F$, $Q_1$ and $Q_2$ are also in different connected components of $G(\Prof)-F$. Thus, there exists an edge $f'$ of $T$ in $F$. Since $F$ does not contain any other edge of $T$, $\sigma(f')=Q_1|Q_2$. Thus, $\Sigma(S_{|\Leaves(T)}) \subseteq \Sigma(T)$.
\qed
%\end{proof}

\section{Conclusion}
We have shown that the characterization of tree compatibility in terms of restricted triangulations of the edge label intersection graph transforms into a characterization in terms of minimal cuts in the display graph. These two characterizations are closely related to the legal triangulation characterization of~\cite{Vakati11}. 
We also derived characterizations of the agreement supertree problem in terms of minimal cuts and minimal separators of the display and edge label intersection graphs respectively.

It is not known if the agreement supertree problem is fixed parameter tractable when parametrized by the number of input trees. It remains to be seen whether any of these characterizations can lead to explicit fixed parameter algorithms for the tree compatibility and agreement supertree problems when parametrized by the number of trees. 
%Grunewald et al.~\cite{Grunewald2008} use quartet graphs to characterize when a collection of quartets define and identify a compatible supertree. An interesting question is whether a similar characterization can be derived for collections of phylogenetic trees using display graphs or edge label intersection graphs.
\paragraph{Acknowledgment.}
We thank Sylvain Guillemot for his valuable comments. %and suggestions.

\bibliographystyle{abbrv}
\bibliography{cuts_modified}

\vfill

\pagebreak

\end{document}